\newcommand{\Rmnum}[1]{\expandafter\@slowromancap\romannumeral #1@}
\def\ps@headings{%
\def\@oddhead{\mbox{}\scriptsize\rightmark \hfil \thepage}%
\def\@evenhead{\scriptsize\thepage \hfil \leftmark\mbox{}}%
\def\@oddfoot{}%
\def\@evenfoot{}}
\newtheorem{theorem}{Theorem}
\newtheorem{lemma}{Lemma}
\newtheorem{corollary}{Corollary}
\newtheorem{remark}{Remark}
\newtheorem{assumption}{Assumption}
\def\EE{{\mathbb{E}}}
\newcommand{\II}[1]{\mathcal{I}_{\{#1\}}}
\newcommand{\scale}[2]{#1^{_{(\!#2\!)}}}
\newcommand{\textmathbf}[1]{\boldmath \textbf{#1} \unboldmath}
\newcommand{\nonnegpart}[1]{\left[#1\right]^+}
\begin{document}

\title{Optimal Content Placement for Peer-to-Peer\\Video-on-Demand Systems$^{1}$}
\author{
\IEEEauthorblockN{Bo (Rambo) Tan\\}
\IEEEauthorblockA{Department of Electrical and Computer Engineering\\
University of Illinois at Urbana-Champaign\\
Urbana, IL 61801, USA \\
Email: botan2@illinois.edu}
\and
\IEEEauthorblockN{Laurent Massouli\'e\\}
\IEEEauthorblockA{Technicolor Paris Research Lab\\
Issy-les-Moulineaux Cedex 92648, France\\
Email: laurent.massoulie@technicolor.com}
}

\maketitle

\begin{abstract}
In this paper, we address the problem of content placement in peer-to-peer systems, with the objective of maximizing the utilization of peers' uplink bandwidth resources. We consider system performance under a many-user asymptotic. We distinguish two scenarios, namely ``Distributed Server Networks'' (DSN) for which requests are exogenous to the system, and ``Pure P2P Networks'' (PP2PN) for which requests emanate from the peers themselves. For both scenarios, we consider a {\em loss network} model of performance, and determine asymptotically optimal content placement strategies in the case of a limited content catalogue. We then turn to an alternative ``large catalogue'' scaling where the catalogue size scales with the peer population.
Under this scaling, we establish that storage space per peer must necessarily grow unboundedly if bandwidth utilization is to be maximized. Relating the system performance to properties of a specific random graph model, we then identify a content placement strategy and a request acceptance policy which jointly maximize bandwidth utilization, provided storage space per peer grows unboundedly, although arbitrarily slowly, with system size.
\end{abstract}

\section{Introduction}
\label{sec: intro}
The amount of multimedia \addtocounter{footnote}{1}\footnotetext{Part of the results developed in this paper have made the object of a ``brief announcement'' in \cite{tanmas10_podcba} and further shown in more detail in \cite{tanmas11_infocom}.} 
traffic accessed via the Internet, already of the order of exabytes ($10^{18}$) per month, is expected to grow steadily in the coming years. A peer-to-peer (P2P) architecture, whereby peers contribute resources to support service of such traffic, holds the promise to support such growth more cheaply than by scaling up the size of data centers. More precisely, a large-scale P2P system based on resources of individual users can absorb part of the load that would otherwise need to be served by data centers.

In the present work we address specifically the Video-on-Demand (VoD) application, for which the critical resources at the peers are storage space and uplink bandwidth. Our objective is to ensure that the largest fraction of traffic is supported by the P2P system. More precisely, we look for content placement strategies that enable content downloaders to maximally use the peers' uplink bandwidth, and hence maximally offload the servers in the data centers. Such strategies must adjust to the distinct popularity of video contents, as a more popular content should be replicated more frequently.

We consider the following mode of operation: Video requests are first submitted to the P2P system; if they are accepted, uplink bandwidth is used to serve them at the video streaming rate (potentially via parallel substreams from different peers). They are rejected if their acceptance would require disruption of an ongoing request service. Rejected requests are then handled by the data center. Alternative modes of operation could be envisioned (e.g., enqueueing of requests, service at rates distinct from the streaming rate, joint service by peers and data center,...). However the proposed model is appealing for the following reasons. It ensures zero waiting time for requests, which is desirable for VoD application; analysis is facilitated, since the system can be modeled as a {\em loss network} \cite{kel91}, for which powerful theoretical results are available; and finally, as our results show, simple placement strategies ensure optimal operation in the present model.

In the P2P system we are considering, there are two kinds of peers: boxes and pure users. Their difference is that boxes do contribute resources (storage space and uplink bandwidth) to the system, while pure users do not. This paper focuses on the following two architectures (illustrated in Figure~\ref{fig: architecture}):
\begin{itemize*}
\item \textbf{Distributed Server Network (DSN):} Requests to download contents come only from pure users, and can be regarded as external requests.
\item \textbf{Pure P2P Network (PP2PN):} There are no pure users in the system, and boxes do generate content requests, which can be regarded as ``internal''.
\end{itemize*}

\begin{figure}[t]
\centering
\includegraphics[width=9.0cm]{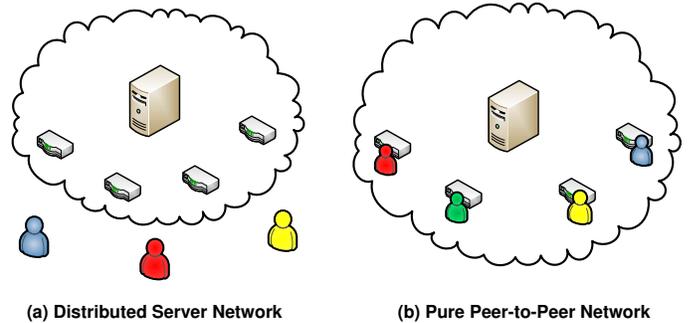}
\caption{Two architectures of P2P VoD systems} \label{fig: architecture}
\end{figure}

The rest of the paper is organized as follows: We review related work in Section~\ref{sec: survey} and introduce our system  model in Section~\ref{sec: model}. 
For the Distributed Server Network scenario, the so-called ``proportional-to-product'' content placement strategy is introduced and shown to be optimal in a large system limit in Section~\ref{sec: content_place_dsn}, where extensive simulation results are also provided. For the Pure P2P Network scenario, a distinct placement strategy is introduced and proved optimal in Section~\ref{sec: content_place_pp2pn}. These results apply for a catalogue of contents of limited size. An alternative model in which catalogue size grows with the user population is introduced in Section~\ref{sec: large_catalog}, where it is shown that the ``proportional-to-product'' placement strategy remains optimal in the DSN scenario in this large catalogue setting, for a suitably modified request management technique.

\section{Related Work}
\label{sec: survey}
The number and location of replicas of distinct content objects in a P2P system have a strong impact on such system's performance. Indeed, together with the strategy for handling incoming requests, they determine whether such requests must either be delayed, or served from an alternative, more expensive source such as a remote data center. Requests which cannot start service at once can either be enqueued (we then speak of a waiting model) or redirected (we then speak of a loss model).

Previous investigations of content placement for P2P VoD systems were conducted by Suh et al. \cite{suhdiokur07}. The problem tackled in~\cite{suhdiokur07} differs from our current perspective, in particular no optimization of placement with respect to content popularity was attempted in this work. Performance analysis of both queueing and loss models are considered in \cite{suhdiokur07}. Valancius et al. \cite{vallaomas09} considered content placement dependent on content popularity, based on a heuristic linear program, and validated this heuristic's performance in a loss model via simulations.

Tewari and Kleinrock \cite{tewkle05PropReplicaPerNode, tewkle06PropReplicaSystem} advocated to tune the number of replicas in proportion to the request rate of the corresponding content, based on a simple queueing formula, for a waiting model, and also from the standpoint of the load on network links. They further established via simulations that Least Recently Used (LRU) storage management policies at peers emulated rather well their proposed allocation.


Wu et al. \cite{WuLi09_keepcache} considered a loss model, and a specific time-slotted mode of operation whereby requests are submitted to randomly selected peers, who accommodate a randomly selected request. They showed that in this setup the optimal cache update strategy can be expressed as a dynamic program. Through experiments, they established that simple mechanisms such as LRU or Least Frequently Used (LFU) perform close to the optimal strategy they had previously characterized.


Kangasharju et al. \cite{KanRosTur07optimizingfile} addressed file replication in an environment where peers are intermittently available, with the aim of maximizing the probability of a requested file being present at an available peer. This differs from our present focus in that the bandwidth limitation of peers is not taken into account, while the emphasis is on their intermittent presence. They established optimality of content replication in proportion to the {\em logarithm} of its popularity, and identified simple heuristics approaching this.




Boufkhad et al. \cite{boufkhad08achievable} considered P2P VoD from yet another viewpoint, looking at the number of contents that can be simultaneously served by a collection of peers.

Content placement problem has also been addressed towards other different optimization objectives. For example, Almeida et al. \cite{almeagver04DeliveryCost} aim at minimizing total delivery cost in the network, and Zhou et al. \cite{ZhouXu07VideoReplica} target jointly maximizing the average encoding bit rate and average number of content replicas as well as minimizing the communication load imbalance of video servers.

Cache dimensioning problem is considered in \cite{Laoutaris03OptStorageAlloc}, where Laoutaris et al. optimized the storage capacity allocation for content distribution networks under a limited total cache storage budget, so as to reduce average fetch distance for the request contents with consideration of load balancing and workload constraints on a given node. Our paper takes a different perspective, focusing on many-user asymptotics so the results show that the finite storage capacity per node is never a bottleneck (even in the ``large catalogue model'', it also scales to infinity more slowly than the system size).


There are obvious similarities between our present objective and the above works. However, none of these identifies explicit content placement strategies at the level of the individual peers, which lead to minimal fraction of redirected (lost) requests in a setup with dynamic arrivals of requests.

Finally, there is a rich literature on loss networks (see in particular Kelly \cite{kel91}); however our present concern of optimizing placement to minimize the amount of rejected traffic in a corresponding loss network appears new.
\section{Model Description}
\label{sec: model}
We now introduce our mathematical model and related notations. Denote the set of all boxes as $\mathcal{B}$. Let $|\mathcal{B}| = B$ and index the boxes from $1$ to $B$. Box $b$ has a local cache $\mathcal{J}_b$ that can store up to $M$ contents, all boxes having the same storage space $M$. We further assume that each box can simultaneously serve $U$ concurrent requests, where $U$ is an integer, i.e., each box has an uplink bandwidth equal to $U$ times the video streaming rate. In particular we assume identical streaming rates for all contents.

The set of available contents is defined as $\mathcal{C}$. Let $|\mathcal{C}|=C$ and index contents from $1$ to $C$. Thus a given box $b$ will be able to serve requests for content $c$ for all $c\in {\mathcal J}_b$.

In a Pure P2P Network, when box $b$ has a request for a certain content $c$, which is coincidentally already in its cache, a ``local service'' is provided and no download service is needed, hence the service to this request consumes no bandwidth resource. The effect of local service on deriving an optimal content placement strategy will be discussed in detail in Section~\ref{sec: content_place_pp2pn}.

In a Distributed Server Network, however, local service will never occur since all the requests are external with respect to the system resources\footnote{In fact the external users issuing requests could  keep local copies of previously accessed content, and hence experience ``local service'' upon re-accessing the same content. But we do not need consider this as this happens outside the perimeter of our system.}.


For a new request that needs a download service, an attempt is made to serve this request by some box holding content $c$, while ensuring that previously accepted requests can themselves be assigned to adequate boxes, given the cache content and bandwidth resources of all boxes. This potentially involves ``repacking'' of requests, i.e., reallocation of all the bandwidth resources in the system (``box-serving-request'' mapping) to accommodate this new download demand pattern. If such repacking can be found, then the request is accepted; otherwise, it is rejected from the P2P system.

It will be useful in the sequel to characterize the concurrent numbers of requests that are amenable to such repacking. Let $\mathbf{n}=\{n_c\}_{c\in{\mathcal C}}$ be the vector of numbers $n_c$ of requests per content $c$. Clearly, a matching of these requests to server boxes is feasible if and only if there exist nonnegative integers $z_{cb}$ (number of concurrent downloads of content $c$ from box $b$) such that
\begin{eqnarray}
\sum_{b: c\in \mathcal{J}_b} z_{cb} & = & n_c,~\forall~c\in \mathcal{C}; \nonumber\\
\sum_{c: c\in \mathcal{J}_b} z_{cb} & \le & U,~\forall~b \in \mathcal{B}. \label{eq: feasible_ori}
\end{eqnarray}
A more compact characterization of feasibility follows by an application of Hall's theorem \cite{bol98} (detailed in Appendix~\ref{sec: proof_hall}), giving that $\mathbf{n}$ is feasible if and only if:
\begin{equation}
\forall~\mathcal{S} \subseteq \mathcal{C},~\sum_{c \in \mathcal{S}} n_c
\le U \left|\{b\in{\mathcal B}:\;  \mathcal{S}\cap  \mathcal{J}_b\ne \emptyset\}\right|. \label{eq: feasible_hall}
\end{equation}

We now introduce statistical assumptions on request arrivals and durations. New requests for content $c$ occur at the instants of a Poisson process with rate $\nu_c$.
We assume that the video streaming rate  is normalized  to $1$, and is the same for all contents. We further assume that all videos have the same duration, again normalized at 1.
Under these assumptions,  the amount of work per time unit brought into the system by content $c$ equals $\nu_c$.

With the above assumptions at hand, assuming fixed cache contents, the vector $\mathbf{n}$ of requests under service is a particular instance of a general stochastic process known as a loss network model. Loss networks were introduced to represent ongoing calls in telephone networks, and exhibit rich structure. In particular, the corresponding stochastic process is reversible, and admits a closed-form stationary distribution. For the Distributed Server Network model, the stationary distribution reads:
\begin{equation}\label{stat_1}
\pi(\mathbf{n})=\frac{1}{Z} \prod_{c\in{\mathcal C}}\frac{\nu_c^{n_c}}{n_c!} \mathcal{I}_{\{\mbox{$\mathbf{n}$ is feasible}\}}.
\end{equation}
In  words, the numbers of requests $n_c$ are independent Poisson random variables with parameter $\nu_c$, conditioned on feasibility of the whole vector $\mathbf{n}$.

Our objective is then to determine content placement strategies so that in the corresponding loss network model, the fraction of rejected requests is minimal. The difficulty in doing this analysis  resides in the fact that the normalizing constant $Z$ is cumbersome to evaluate. Nevertheless, simplifications occur under large system asymptotics, which we will exploit in the next sections.

We conclude this section by the following remark. For simplicity we assumed in the above description that a particular content is either fully replicated at a peer, or not present at all, and that a request is served from only one peer. It should however be noted that we can equally assume that contents are split into sub-units, which can be placed onto distinct peers, and downloaded from such distinct peers in parallel sub-streams in order to satisfy a request. This extension is detailed in Appendix~\ref{sec: parallel}.


\section{Optimal Content Placement in Distributed Server Networks}
\label{sec: content_place_dsn}

We first describe a simple adaptive cache update strategy driven by demand, and show why it converges to a ``predetermined'' content placement called ``proportional-to-product'' strategy. We then establish the optimality of this ``proportional-to-product'' placement in a large system asymptotic regime.

\subsection{The Proportional-to-Product Placement Strategy}
\label{sec: proportional}
A simple method to adaptively update the caches at boxes driven by demand is described as follows:

\noindent\hrulefill\\
\textbf{Demand-Driven Cache Update}

\noindent\hrulefill\\
Whenever a new request comes, with probability $\epsilon B$ ($\epsilon$ is chosen such that $\epsilon B\le 1$), the server picks a box $b$ uniformly at random, and attempts to push content $c$ into this box's cache. If $c$ is already in there, do nothing; otherwise, remove a content selected uniformly at random from the cache.

\noindent\hrulefill\\
%

Since external demands for content $c$ are according to a Poisson process with rate $\nu_c$, we find that under the above simple strategy, content $c$ is pushed at rate $\epsilon \nu_c$ into a particular box which is not caching content $c$. Recall that each box stores $M$ distinct contents, and let $j$ denote a candidate ``cache state'', which is a size $M$ subset of the full content set ${\mathcal C}$. For convenience, let ${\mathcal{J}}$ denote the collection of all such $j$.

With the above strategy, the caches at each box evolve independently according to a continuous-time Markov process. The rate at which cache state $j$ is changed to $j'$, where $j'=j+\{c\}\setminus\{d\}$ for some contents $d\in j$, $c\notin j$, which we denote by $q(j,j')$, is easily seen to be
$
q(j,j')=\epsilon \nu_c/M$.
Indeed, content $d$ is evicted with probability $1/M$, while content $c$ is introduced at rate $\epsilon \nu_c$.

It is easy to verify that the distribution $p(\cdot)$ given by
\begin{equation}
p(j)=\frac{1}{Z}\prod_{c\in j}\nu_c,\quad j\in{\mathcal{J}},
\end{equation}
for some suitable normalizing constant $Z$,
 verifies the follwing equation:
\begin{equation}
p(j)q(j,j')=p(j')q(j',j),\quad j,j'\in{\mathcal{J}}. \label{eq: mc_evolve}
\end{equation}
 The latter relations, known as the local balance equations, readily imply that $p(\cdot)$ is a stationary distribution for the above Markov process; since the process is irreducible, this is the unique stationary distribution.

Thus, we can conclude that under this cache update strategy, the random cache state at any box eventually follows this stationary distribution. This is what we refer to as the \textbf{``proportional-to-product'' placement strategy}, and it is the one we advocate in the Distributed Server Network scenario.\\

\begin{remark}
The customized parameter $\epsilon$ should not be too large, otherwise the burden on the server will be increased due to use of ``push''. Neither should it be too small, otherwise the Markov chain will converge too slowly to the steady state.\hfill$\diamond$
\end{remark}

Under the cache update strategy, the distribution of cache contents needs time to converge to the steady state. However, if we have a priori information about content popularity, we can use a sampling strategy as an alternative way to directly generate proportional-to-product content placement in one go. One  method works as follows: 

\noindent\hrulefill\\
\textbf{Sampling-Based Preallocation}

\noindent\hrulefill\\
Select successively $M$ contents at random in an i.i.d. fashion, according to the probability distribution $\{\hat{\nu}_c\}$, where $\hat{\nu}_c=\nu_c/\sum_{c'\in{\mathcal C}}\nu_{c'}$ is the normalized popularity. If there are duplicate selections of some content, re-run the procedure. It is readily seen that this yields a sample with the desired distribution. 

\noindent\hrulefill\\

An alternative 
sampling strategy which can be faster than the one described above when very popular items are present is given in the Appendix~\ref{sec: analyse_push}.




\subsection{A Loss Network Under Many-User Asymptotics}
\label{sec: loss_network}
We now consider the asymptotic regime called \textbf{``many user--fixed catalogue'' scaling}: The number of boxes $B$ goes to infinity. The system load, defined as
\begin{equation}
\rho \triangleq \frac{\sum_{c\in \mathcal{C}}\nu_c}{BU}, \label{eq: rho}
\end{equation}
is assumed to remain fixed, which is achieved in the present section by assuming that the content collection ${\mathcal C}$ is kept fixed, while the individual rates $\{\nu_c\}$ scale linearly with $B$. We also assume that  the normalized content popularities $\{\hat{\nu}_c\}$ remain fixed as $B$ increases. It thus holds that $\nu_c=\hat{\nu}_c\rho B U$ for all $c\in \mathcal{C}$. Note that although boxes are pure resources rather than users, scaling of $\{\nu_c\}$ with $B$ to infinity actually indicates a ``many-user'' scenario.

To analyze the performance of our proposed proportional-to-product strategy, we require that the cache contents are sampled at random according to this strategy and are subsequently kept fixed. This can either reflect the situation where we use the previously introduced sampling strategy, or alternatively the situation where the cache update strategy has already made the distribution of cache states converge to the steady state, and occurs at a slower time scale than that at which new requests arise and complete.

Note that, as $B$ grows large, the right-hand side in the feasibility constraint (\ref{eq: feasible_hall}) verifies, by the strong law of large numbers,
\begin{equation}
 \left|\{b\in{\mathcal B}:\;  \mathcal{S}\cap  \mathcal{J}_b\ne \emptyset\}\right|\sim B\sum_{j:j\cap S\ne \emptyset} m_j.\label{eq: SLLN}
\end{equation}
Here, $\{m_j\}$ corresponds to a particular content placement strategy, under which 
each box holds a size $M$ content set $j$ with probability $m_j$, and this happens independently over boxes. Specifically, $m_j=\frac{1}{Z}\prod_{c\in j}\hat{\nu}_c$ (where $Z$ is a normalizing constant) corresponds to our proportional-to-product placement strategy.

We now establish a sequence of loss networks indexed by a large parameter $B$. For the $B^{\rm th}$ loss network, requests for content $c \in \mathcal{C}$ (regarded as ``calls of type $c$'') arrive at rate $\scale{\nu_c}{B} = (\rho U \hat{\nu}_c)\cdot B$, each ``virtual link'' $\mathcal{S}\subseteq \mathcal{C}$ has a capacity \begin{equation}
\scale{W_{\mathcal{S}}}{B} \triangleq (U\sum_{j:j\cap S\ne \emptyset} m_j)\cdot B, \label{eq: eq_capacity} 
\end{equation}
and $c\in \mathcal{S}$ represents that virtual link $\mathcal{S}$ is part of the ``route'' which serves call of type $c$.\footnote{Note that this construction in fact admits a form of 
fixed routing which is equivalently transformed from a dynamic routing model where each particular box is regarded as a link and calls of type $c$ can use any single-link route corresponding to a box holding content $c$. This equivalent transform is based on the assumption that repacking is allowed (cf. Section 3.3. in \cite{kel91}). We have already found this equivalent transform by converting feasibility condition \eqref{eq: feasible_ori} to \eqref{eq: feasible_hall} in Section \ref{sec: model}.} 
This particular setup has been identified as the ``large capacity network scaling'' in Kelly \cite{kel91}. There, it is shown that the loss probabilities in the limiting regime where $B\to\infty$ can be characterized via the analysis of an associated variational problem. 

We now describe the corresponding results in \cite{kel91} relevant to our present purpose. For the $B^{\rm th}$ loss network, consider the problem of finding the mode of the stationary distribution (\ref{stat_1}), which corresponds to maximizing $\sum_{c\in \mathcal{C}}(\scale{n_c}{B} \log \scale{\nu_c}{B} - \log \scale{n_c}{B}!)$ over feasible $\scale{\mathbf{n}}{B}$. Then, approximate $\log \scale{n_c}{B}!$ by $\scale{n_c}{B} \log \scale{n_c}{B} - \scale{n_c}{B}$ according to Stirling's formula and replace the integer vector $\scale{\mathbf{n}}{B}$ by a real-valued vector $\scale{\mathbf{x}}{B}$. 
This leads to the following optimization problem:\\\\
{\bf [OPT 1]}
\begin{eqnarray}
\max_{\scale{\mathbf{x}}{B}} & & \sum_{c\in \mathcal{C}}(\scale{x_c}{B} \log \scale{\nu_c}{B} - \scale{x_c}{B} \log \scale{x_c}{B} + \scale{x_c}{B})  \label{opt: mode_primal}\\
s.t.              & & \forall~\mathcal{S} \subseteq \mathcal{C},~\sum_{c \in \mathcal{S}} \scale{x_c}{B} \le \scale{W_{\mathcal{S}}}{B}   \label{eq: primal_constraint} \\
         \hbox{over}         & & \scale{\mathbf{x}}{B}\ge 0.  \nonumber
\end{eqnarray}
The corresponding Lagrangian is given by:
\begin{eqnarray*}
L(\scale{\mathbf{x}}{B},\scale{\mathbf{y}}{B}) &=& \sum_{c\in \mathcal{C}}(\scale{x_c}{B} \log \scale{\nu_c}{B} - \scale{x_c}{B} \log \scale{x_c}{B} + \scale{x_c}{B}) \\
&&+ \sum_{\mathcal{S}\subseteq \mathcal{C}} \scale{y}{B}_{_{\mathcal{S}}}(\scale{W_\mathcal{S}}{B} - \sum_{c \in \mathcal{S}} \scale{x_c}{B}),
\end{eqnarray*}
where $\{\scale{y_{_{\mathcal{S}}}}{B}\}_{_{\mathcal{S}\subseteq \mathcal{C}}}$ are Lagrangian multipliers. The KKT conditions  for this convex optimization problem comprise the original constraints and the following ones:
\begin{equation*}
\scale{\bar{y}}{B}_{_{\mathcal{S}}}(\scale{W_\mathcal{S}}{B} - \sum_{c \in \mathcal{S}} \scale{\bar{x}_c}{B}) = 0,~\scale{\bar{y}}{B}_{_{\mathcal{S}}}\ge 0,~\forall~\mathcal{S}\subseteq \mathcal{C},
\end{equation*}
\begin{equation}
\frac{\partial L(\scale{\bar{\mathbf{x}}}{B},\scale{\bar{\mathbf{y}}}{B})}{\partial \scale{x_c}{B}} = \log \scale{\nu_c}{B} - \log \scale{\bar{x}_c}{B} - \sum_{\mathcal{S}: c \in \mathcal{S}} \scale{\bar{y}}{B}_{_{\mathcal{S}}} = 0,~\forall~c \in \mathcal{C}
\label{eq: Lagrarian_diff}
\end{equation}
where $(\scale{\bar{\mathbf{x}}}{B},\scale{\bar{\mathbf{y}}}{B})$ is a solution to the optimization problem. From equation~(\ref{eq: Lagrarian_diff}), we further get
\begin{equation}
\scale{\bar{x}_c}{B} = \scale{\nu_c}{B} \exp(-\sum_{\mathcal{S}: c \in \mathcal{S}} \scale{\bar{y}}{B}_{_{\mathcal{S}}}),~\forall~ c\in \mathcal{C}.\label{eq: x_opt}
\end{equation}


Then the result that we will need from Kelly \cite{kel91} is the following: 
for the $B^{\rm th}$ loss network, 
the steady state probability of accepting request for $c$, denoted by $\scale{A}{B}_c$, verifies
\begin{equation}
\scale{A}{B}_c = \exp\left(-\sum_{\mathcal{S}: c \in \mathcal{S}} \scale{\bar{y}}{B}_{_{\mathcal{S}}}\right)+O\left(B^{-\frac{1}{2}}\right),~\forall~ c\in \mathcal{C},\label{eq: accept_rate}
\end{equation}
where $\scale{\bar{y}}{B}_{\mathcal S}$ are the Lagrangian multipliers of the previous optimization problem.


\subsection{Optimality of Proportional-to-Product Content Placement}
\label{sec: optimality_prod}
Note that the global acceptance probability, denoted by $A_{sys}$, which also reads $A_{sys}=\sum_{c\in{\mathcal C}}\hat{\nu}_cA_c$, cannot exceed $\min(1,1/\rho)$. Indeed, it is clearly no larger than 1. It cannot exceed $1/\rho$ either, otherwise the system would treat more requests than its available resources.

We now prove that the proportional-to-product content placement not only achieves the optimal global acceptance probability $A_{sys}=\min(1,1/\rho)$, but also achieves fair individual acceptance probabilities, i.e., $A_c=A_{sys}$ for all $c$. More precisely, we have the following theorem:\\

\begin{theorem}
\label{thm: arr_prod_alloc}
By using $m_j = \prod_{c\in j}\hat{\nu}_c /Z$ for all $j\subseteq {\mathcal C}$ s.t. $|j|=M$, where
$Z$ is the normalizing constant, we have $\lim_{B\to\infty}\scale{A}{B}_c =\min\{1, 1/\rho\},~\forall c\in \mathcal{C}$, for fixed  $\rho$ and ${\mathcal C}$.\hfill $\diamond$
\end{theorem}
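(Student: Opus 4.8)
The plan is to combine Kelly's characterization \eqref{eq: accept_rate} with an explicit solution of OPT 1. First I would note that the objective of OPT 1 is strictly concave and its constraints are linear, so the maximizer $\scale{\bar{\mathbf x}}{B}$ is unique and characterized by the KKT conditions; combining the stationarity relation \eqref{eq: x_opt} with \eqref{eq: accept_rate} gives $\scale{A}{B}_c = \scale{\bar x_c}{B}/\scale{\nu_c}{B} + O(B^{-1/2})$, so the task reduces to computing the maximizer. Writing $\scale{x_c}{B} = B\,\xi_c$ and substituting $\scale{\nu_c}{B}=\rho U\hat\nu_c B$ and $\scale{W_{\mathcal S}}{B}=U P_{\mathcal S} B$ with $P_{\mathcal S}:=\sum_{j:\,j\cap\mathcal S\ne\emptyset} m_j$, the common factor $B$ leaves the feasible region unchanged and only rescales the objective, so it suffices to solve the single, $B$-independent program of maximizing $\sum_{c\in\mathcal C}(\xi_c\log(\rho U\hat\nu_c)-\xi_c\log\xi_c+\xi_c)$ subject to $\sum_{c\in\mathcal S}\xi_c\le U P_{\mathcal S}$ for all $\mathcal S$ and $\xi\ge0$. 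The goal then becomes to show that the maximizer $\bar\xi$ satisfies $\bar\xi_c/(\rho U\hat\nu_c)=\min\{1,1/\rho\}$.

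The crux of the whole argument is the covering inequality
\begin{equation}
\sum_{c\in\mathcal S}\hat\nu_c\;\le\;P_{\mathcal S},\qquad \forall\,\mathcal S\subseteq\mathcal C, \label{eq: covering}
\end{equation}
which expresses that proportional-to-product placement covers each popularity slice $\mathcal S$ at least as well as its aggregate demand. To prove \eqref{eq: covering}, set $\mathcal T:=\mathcal C\setminus\mathcal S$ and let $e_k(\mathcal A):=\sum_{j\subseteq\mathcal A,\,|j|=k}\prod_{c\in j}\hat\nu_c$. Since $m_j=\prod_{c\in j}\hat\nu_c/e_M(\mathcal C)$, inequality \eqref{eq: covering} is equivalent to $1-P_{\mathcal S}=\Pr_m[j\subseteq\mathcal T]=e_M(\mathcal T)/e_M(\mathcal C)\le e_1(\mathcal T)=\sum_{c\in\mathcal T}\hat\nu_c$ (trivial when $|\mathcal T|<M$, as then $e_M(\mathcal T)=0$ and $P_{\mathcal S}=1$). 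Expanding $e_M(\mathcal C)=\sum_{k=0}^{M} e_k(\mathcal S)e_{M-k}(\mathcal T)\ge e_M(\mathcal T)+e_1(\mathcal S)e_{M-1}(\mathcal T)$ and using the elementary bound $e_1(\mathcal T)e_{M-1}(\mathcal T)\ge M\,e_M(\mathcal T)\ge e_M(\mathcal T)$ (each size-$M$ subset arises $M$ times when a singleton is multiplied by an $(M-1)$-subset, and the remaining terms are nonnegative) together with $e_1(\mathcal S)+e_1(\mathcal T)=e_1(\mathcal C)=1$ yields $e_M(\mathcal C)\ge e_M(\mathcal T)/e_1(\mathcal T)$, which is \eqref{eq: covering}.

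It then remains to exhibit, in each load regime, a KKT-valid primal-dual pair; by concavity this certifies global optimality. When $\rho\le1$ I would take $\bar\xi_c=\rho U\hat\nu_c$ with all multipliers zero: feasibility is $\rho\sum_{c\in\mathcal S}\hat\nu_c\le\sum_{c\in\mathcal S}\hat\nu_c\le P_{\mathcal S}$ by \eqref{eq: covering}, stationarity reads $\log(\rho U\hat\nu_c)-\log\bar\xi_c=0$, and hence $A_c=\bar\xi_c/(\rho U\hat\nu_c)=1$. When $\rho>1$ I would take $\bar\xi_c=U\hat\nu_c$, $\bar y_{\mathcal C}=\log\rho>0$ and $\bar y_{\mathcal S}=0$ otherwise: feasibility is $U\sum_{c\in\mathcal S}\hat\nu_c\le U P_{\mathcal S}$ by \eqref{eq: covering}, stationarity reads $\log(\rho U\hat\nu_c)-\log(U\hat\nu_c)-\log\rho=0$, and complementary slackness holds because the only constraint carrying a positive multiplier is $\mathcal S=\mathcal C$, which is tight since $\sum_{c\in\mathcal C}\bar\xi_c=U=U P_{\mathcal C}$ (as $P_{\mathcal C}=1$). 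This gives $A_c=\bar\xi_c/(\rho U\hat\nu_c)=1/\rho$. In both cases $A_c=\min\{1,1/\rho\}$, and since the multipliers are $B$-independent, Kelly's formula \eqref{eq: accept_rate} upgrades this to $\lim_{B\to\infty}\scale{A}{B}_c=\min\{1,1/\rho\}$ for every $c$.

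The main obstacle is the covering inequality \eqref{eq: covering}: it is the single place where the specific proportional-to-product structure is used (a generic placement would violate it and fail to be optimal), and it is exactly what delivers both primal feasibility in the two regimes and the tightness of the constraint $\mathcal S=\mathcal C$ that drives the overloaded case. Once \eqref{eq: covering} is established, the remainder is a routine guess-and-verify through the KKT conditions.
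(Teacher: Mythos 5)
Your proposal is correct and sits inside the same overall framework as the paper's proof: both invoke Kelly's limit formula \eqref{eq: accept_rate}, reduce the problem to exhibiting a KKT-certified primal--dual pair for OPT~1, use the dual variable supported only on $\mathcal{S}=\mathcal{C}$ with value $\log\rho$ in the overloaded regime (and all multipliers zero when $\rho\le 1$), and observe that the whole burden falls on the single feasibility inequality $\sum_{c\in\mathcal{S}}\hat\nu_c\le\sum_{j:\,j\cap\mathcal{S}\ne\emptyset}m_j$, which is exactly the paper's inequality \eqref{eq: feasible_prod} after cancelling $\rho BU$. Where you genuinely depart from the paper is in how that inequality is proved. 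The paper clears the normalizing constant, expands both sides into monomials in the $\hat\nu_c$, classifies the resulting product terms into two types indexed by subsets $\mathcal{K}$, and compares multiplicities term by term (with a case distinction on $|\mathcal{S}\cap\mathcal{K}|$). You instead pass to the complement $\mathcal{T}=\mathcal{C}\setminus\mathcal{S}$, rewrite the claim as $e_M(\mathcal{T})/e_M(\mathcal{C})\le e_1(\mathcal{T})$ in terms of elementary symmetric polynomials, and derive it from the convolution identity $e_M(\mathcal{C})=\sum_{k}e_k(\mathcal{S})e_{M-k}(\mathcal{T})$ together with the Maclaurin-type bound $e_1(\mathcal{T})e_{M-1}(\mathcal{T})\ge M\,e_M(\mathcal{T})$ and $e_1(\mathcal{S})+e_1(\mathcal{T})=1$; I checked that the resulting chain, which reduces to $\left(1-e_1(\mathcal{T})\right)\left(e_1(\mathcal{T})e_{M-1}(\mathcal{T})-e_M(\mathcal{T})\right)\ge 0$, is valid, and the degenerate case $|\mathcal{T}|<M$ is handled. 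Your route buys a shorter, more structured argument with a clear probabilistic reading ($1-P_{\mathcal{S}}$ as the probability a random cache avoids $\mathcal{S}$) and avoids the bookkeeping over term multiplicities; the paper's direct counting buys self-containedness and makes visible exactly where each monomial lands, which is arguably more elementary if less elegant. The remaining steps (rescaling by $B$, reading off $A_c$ from \eqref{eq: x_opt} and \eqref{eq: accept_rate}, tightness of the $\mathcal{S}=\mathcal{C}$ constraint) match the paper's.
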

\

Before giving the proof, we comment on the result. One point to note is that because of \eqref{eq: SLLN}, the above optimal acceptance rate is achieved with probability one under any random sampling which follows the proportional-to-product scheme. Secondly, the optimality of the asymptotic acceptance probability does not depend on $M$, as long as $M\ge 1$. Thus for this particular scaling regime, storage space is not a bottleneck. As we shall see in the next two sections, increasing $M$ {\textbf {does}} improve performance if either local services occur, as in the Pure P2P Network scenario (Section 4), or if the catalogue size $C$ scales with the box population size $B$, a case not covered by the classical literature on loss networks, and to which we turn in Section~\ref{sec: large_catalog_content_placement}.\\

\begin{proof}
First, we consider $\rho \ge 1$. 
Letting 
\begin{equation}
\exp\left(-\sum_{\mathcal{S}: c \in \mathcal{S}} \scale{\bar{y}}{B}_{_{\mathcal{S}}}\right) = 1/\rho,~\forall c\in\mathcal{C}, \label{eq: target_accept}
\end{equation}
we have
\begin{equation}
\forall c\in \mathcal{C},~\sum_{\mathcal{S}: c \in \mathcal{S}} \scale{\bar{y}}{B}_{_{\mathcal{S}}}= \log \rho. \label{eq: y_sum}
\end{equation}
Putting equation (\ref{eq: y_sum}) into (\ref{eq: x_opt}) leads to
\begin{equation}
\forall c\in \mathcal{C},~ \scale{\bar{x}}{B}_c = \scale{\nu}{B}_c /\rho. \nonumber
\end{equation}
Thus, inequality~(\ref{eq: primal_constraint}) in OPT 1 becomes
\begin{equation}
\forall \mathcal{S}\subseteq \mathcal{C},~\sum_{c\in \mathcal{S}} \scale{\nu}{B}_c\le \rho \sum_{j:j \cap \mathcal{S} \not= \emptyset} m_j BU. \label{eq: opt1_feasible2}
\end{equation}
Since $\scale{\nu}{B}_c = \rho BU \cdot \hat{\nu}_c$ and $\sum_{c\in_{\mathcal{C}}} \hat{\nu}_c = 1$, inequality~(\ref{eq: opt1_feasible2}) further becomes, upon explicitly writing out the normalization constant $Z$:
\begin{equation}
\forall \mathcal{S}\subseteq \mathcal{C},~\sum_{c\in\mathcal{S}}\hat{\nu}_c \cdot \!\!\!\!\!\! \sum_{_{\mathcal{G}:~\mathcal{G}\subseteq\mathcal{C}\atop {~~~~|\mathcal{G}|=M}}} \prod_{c\in \mathcal{G}}\hat{\nu}_c
\le \sum_{c\in\mathcal{C}}\hat{\nu}_c \cdot \!\!\!\!\!\!\sum_{_{\mathcal{G}:~ \mathcal{G}\cap\mathcal{S}\not=\emptyset \atop {{\mathcal{G}\subseteq\mathcal{C}}\atop{~|\mathcal{G}|=M}}}} \prod_{c\in \mathcal{G}}\hat{\nu}_c. \label{eq: feasible_prod}
\end{equation}
Two types of product terms (mapped to subsets $\mathcal{K}\subseteq \mathcal{C}$) appear on both sides:
\begin{enumerate*}
\item[\Rmnum{1}.] $\prod_{c\in \mathcal{K}}\hat{\nu}_c$: $|\mathcal{K}| = M+1,~\mathcal{K}\cap S\not=\emptyset$.
\item[\Rmnum{2}.] $(\prod_{c\in \mathcal{K}}\hat{\nu}_c)\cdot \hat{\nu}_{c'}$: $c'\in \mathcal{K}\cap S,~|\mathcal{K}| = M$.
\end{enumerate*}
To show whether inequality~(\ref{eq: feasible_prod}) hold, we only have to prove that given any $\mathcal{S}\subseteq\mathcal{C}$, for each product term (related to a $\mathcal{K}$) which appears in one inequality corresponding to a certain $\mathcal{S}$, its multiplicity on the left hand side is no more than that on the right hand side.
\begin{enumerate*}
\item {\bf For a product term of Type \Rmnum{1}:}
    \begin{itemize*}
    \item On the LHS: Since $\prod_{c\in\mathcal{K}}\hat{\nu}_c = \prod_{c\in\mathcal{G}}\hat{\nu}_c \cdot \hat{\nu}_{c'}$ for some $\mathcal{G}\subseteq \mathcal{C}$ and $c'\in \mathcal{S}\cap \mathcal{K}$, where $\mathcal{G}$ is a size $M$ content set, $c'\not\in \mathcal{G}$, and $\mathcal{K} = \mathcal{G}+\{c'\}$. It is easy to see that we have $|\mathcal{S}\cap \mathcal{K}|$ different choice of $c'$ in a $\mathcal{K}$, so the multiplicity of this product term on the LHS equals $|\mathcal{S}\cap \mathcal{K}|$.
    \item On the RHS: When $|\mathcal{S}\cap \mathcal{K}| \ge 2$, for any $c'\in \mathcal{K}$, $\mathcal{K}\setminus \{c'\}$ is a size $M$ content set of which the intersect with $\mathcal{S}$ is not empty, hence the multiplicity equals $|\mathcal{K}|~(=M+1)$. When $|\mathcal{S}\cap \mathcal{K}| = 1$, the exception to the above case is that if $c'\in \mathcal{S}\cap \mathcal{K}$, then $\mathcal{K}\setminus \{c'\}$ is a size $M$ content set which has no intersect with $\mathcal{S}$ and is actually impossible to appear in the second summation term (over all size $M$ content sets $\mathcal{G}$ s.t. $\mathcal{G}\cap \mathcal{S}\not=\emptyset$) in inequality~(\ref{eq: feasible_prod}). Thus, the multiplicity equals $|\mathcal{K}|-1~(=M)$.
    \end{itemize*}
    From above, we can see that the multiplicity of the product term on the LHS is always no more than that on the RHS.

\item {\bf For a product term of Type \Rmnum{2}:}\\
$\mathcal{K}$ is actually already a size $M$ content set $\mathcal{G}$ s.t. $\mathcal{G}\cap\mathcal{C}\not=\emptyset$. Therefore, it is easy to see that on both sides, the multiplicities of this product term are both $1$.
\end{enumerate*}
Now we can conclude that inequality~(\ref{eq: feasible_prod}) holds for all $\mathcal{S}\subseteq \mathcal{C}$, and continue to check the complementary slackness. Given $\rho \ge 1$,
one simple solution to equation~(\ref{eq: y_sum}) reads:
\begin{equation}
\forall~\mathcal{S}\subseteq \mathcal{C},~\scale{\bar{y}}{B}_{_{\mathcal{S}}} = \log \rho \cdot \mathcal{I}_{_{\{\mathcal{S}=\mathcal{C}\}}}.  \label{eq: sol_multiplier}
\end{equation}
Besides, inequality~(\ref{eq: feasible_prod}) is tight for $\mathcal{S}=\mathcal{C}$ (we even do not need to check this when $\rho =1$). Therefore, complementary slackness is always satisfied with solution~(\ref{eq: sol_multiplier}).

So far we have proved that the KKT condition holds when $\rho\ge 1$. When $\rho <1$, we modify \eqref{eq: target_accept} by letting 
\begin{equation}
\exp\left(-\sum_{\mathcal{S}: c \in \mathcal{S}} \scale{\bar{y}}{B}_{_{\mathcal{S}}}\right) = 1,~\forall c\in\mathcal{C}, \label{eq: target_accept2}
\end{equation}
and hence there is an additional factor $1/\rho > 1$ on the RHS of inequality~(\ref{eq: feasible_prod}). Since the old version of inequalities~(\ref{eq: feasible_prod}) is proved to hold, the new version automatically holds, but none of them is tight now. However, 
from \eqref{eq: target_accept2} we have $\scale{\bar{y}}{B}_{_{\mathcal{S}}} = 0,~\forall~\mathcal{S}\subseteq \mathcal{C}$, which means complementary slackness is always satisfied (similar to $\rho =1$). 

Therefore, according to equation \eqref{eq: accept_rate}, it can be  concluded that by using $m_j = \prod_{c\in j}\hat{\nu}_c/Z$ for all $j$, we can achieve 
\begin{equation*}
\scale{A}{B}_c =\min\{1,1/\rho\}+O\left(B^{-\frac{1}{2}}\right),~\forall c\in \mathcal{C},
\end{equation*}
so $\lim_{B\rightarrow \infty} \scale{A}{B}_c = \min\{1,1/\rho\}$. 
\end{proof}

\subsection{Simulation Results}
In this subsection, we use extensive simulations to evaluate the performances of the two implementable schemes proposed in Subsection \ref{sec: proportional} which follow the ``proportional-to-product'' placement strategy, namely the sampling-based preallocation scheme and the demand-driven cache update (labeled as \textbf{``SAMP''} and \textbf{``CU''}, respectively).

We compare the results with the theoretical optimum (i.e., loss rate for each content equals $(1-1/\rho)^+$; the curves are labeled as \textbf{``Optimal''}) and a uniform placement strategy (labeled as \textbf{``UNIF''}) defined as the following:  
first, permute all the contents uniformly at random, resulting in a content sequence $\{c_i\}$, for $1\le i \le C$; then, 
push the $M$ contents indexed by subsequence $\{c_{(j\mod C)}\}_{bM+1 \le j \le (b+1)M}$ into the cache of box $b$, for $1\le b\le B$.
UNIF is also used to generate the initial content placement for CU so that the loss rate can be reduced during the warm-up period.

If not further specified, the default parameter setting is as follows:
The popularity of contents $\{\hat{\nu}_c\}$ follows a zipf-like distribution (see e.g. \cite{brecaofan99}), i.e., 
\begin{equation}
\hat{\nu}_c = \frac{(c_0 + c)^{-\alpha}}{\sum_{c'\in \mathcal{C}} (c_0 + c')^{-\alpha}}, \label{eq: zipf}
\end{equation}
with a decaying factor $\alpha > 0$ and the shift $c_0 \ge 0$. We use  $\alpha = 0.8$ and $c_0 = 0$.
The content catalogue size $C = 500$ and the number of boxes $B = 4000$. Each box can store $M = 10$ contents and serve at most $U = 4$ concurrent requests. The duration of downloading each content is  exponentially distributed with mean equal to $1$ time unit. The parameter $\epsilon$ in the cache update algorithm is set as $1/B$ such that upon a request, one box will definitely be chosen for cache update.  

For every  algorithm, we take the average over $10$ independent repetitive experiments, each of which is observed for $10$ time units. According to the sample path, the initial $1/5$ of the whole period is regarded as a ``warm-up'' period and hence ignored in the calculation of final statistics.\footnote{We can get enough samples during each observation period of 10 time units (for example, when $\rho=1$, $B=4000$ and $U=4$, the average arrivals would be $160000$). It has also been checked that after the warm-up period, the distribution of cache states well approximates the proportional-to-product placement and is kept quite stably for the remaining observation period.}  

Some implementation details are not captured by our theoretical model, but should be considered in simulations. Upon a request arrival, the most idle box (i.e., with the largest number of free connections) among all the boxes which hold the requested content is chosen to provide the service, for the purpose of load balancing. If none of them is idle, we use a heuristic repacking algorithm which iteratively reallocates  the ongoing services among boxes, in order to handle as many  requests as possible while still respects load balancing. One important parameter which trades off the repacking complexity and the performance is the maximum number of iterations $t_r^{max}$, which is set as ``undefined'' by default (i.e., the iterations will continue until the algorithm terminates; theoretically there are at most $C$ iterations). Other details regarding the repacking algorithm can be found in Appendix \ref{sec: implement}. We will see an interesting observation about $t_r^{max}$ later. 

\begin{figure}[t]
\centering
\includegraphics[width=9.2cm]{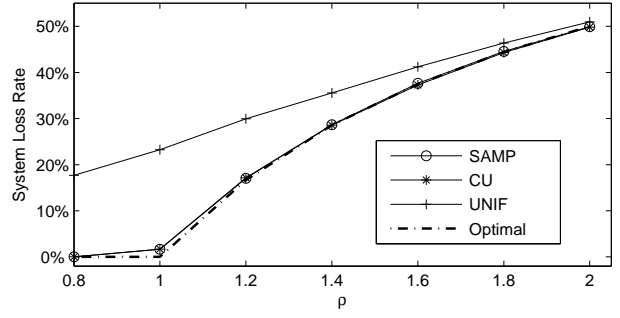}
\caption{System loss rates under different traffic loads} \label{fig: dsn_loss_rho}
\end{figure}

Figure~\ref{fig: dsn_loss_rho} evaluates system loss rates under different traffic loads $\rho$. Our two algorithms SAMP and CU, which target the proportional-to-product placement, both match the theoretically optimum very well.\footnote{In fact, around $\rho = 1$, they perform a little worse than the optimum. The reason is that $\rho=1$ is the ``critical traffic load'' (a separation point between zero-loss and nonzero-loss ranges), under which the simulation results are easier to incur deviation from the theoretical value.} On the other hand, the UNIF algorithm, which does not utilize any information about content popularity, incurs a large loss even if the system is underloaded ($\rho < 1$). 
The gain of proportional-to-product placement over UNIF becomes less significant as the traffic load grows, which can be easily expected.

\begin{figure}[t]
\centering
\includegraphics[width=9.2cm]{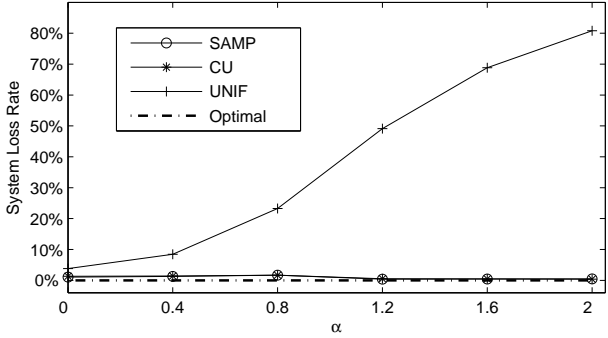}
\caption{System loss rates with different $\alpha$ ($\rho=1$)} \label{fig: dsn_loss_alpha}
\end{figure}

In Figure~\ref{fig: dsn_loss_alpha}, when the decaying factor $\alpha$ in the zipf-like distribution increases, the distribution of placed contents generated by UNIF has a higher discrepancy from the real content popularity distribution, so UNIF performs worse. On the other hand, the two proportional-to-product strategies are insensitive to the change of content popularity, as we expected.

\begin{figure}[t]
\centering
\includegraphics[width=9.2cm]{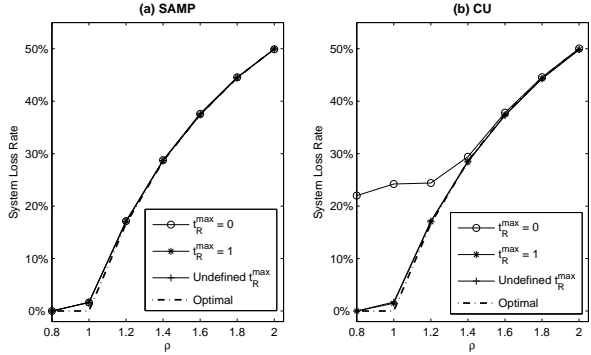}
\caption{Effect of repacking on the system loss rate} \label{fig: dsn_loss_rpk}
\end{figure}

Figure~\ref{fig: dsn_loss_rpk} shows the effect of repacking on the system loss rate. In sub-figure (a), we find that under SAMP, repacking is not  necessary. In sub-figure (b) which shows the performances of CU, when $\rho$ is low, one iteration of repacking is sufficient to make the performance close enough to the optimum; when $\rho$ is high, repacking also becomes unnecessary. The main take-away message from this figure is that we can execute a repacking procedure of very small complexity   
without sacrificing much performance. The reason is that when the server picks a box to serve a request, it already respects the rule of load balancing.

We then explain why CU still needs one iteration of repacking to improve the performance when $\rho$ is low. Note that during the cache update, it is possible that the box is currently uploading the ``to-be-kicked-out'' content to some users. If repacking is enabled, those ongoing services can be repacked to other boxes (see details in Appendix \ref{sec: implement}), but if $t_r^{max} = 0$ (no repacking), they will be terminated and counted as losses. When $\rho$ is high, however, boxes are more likely to be busy, which leads to the failure of repacking, so repacking makes no difference. 

\begin{figure}[t]
\centering
\includegraphics[width=9.2cm]{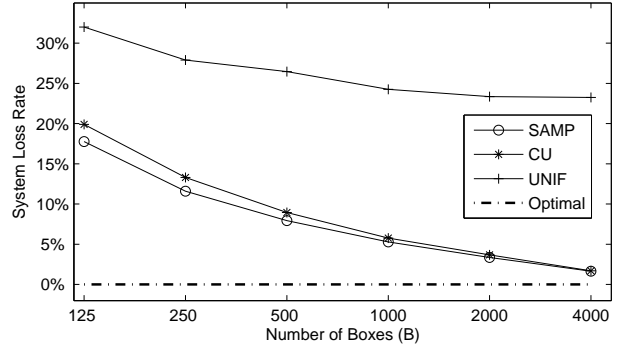}
\caption{System loss rates with different number of boxes} \label{fig: dsn_loss_B}
\end{figure}

\begin{figure}[t]
\centering
\includegraphics[width=9.2cm]{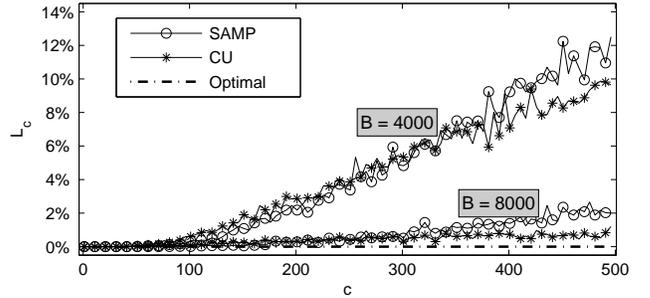}
\caption{Loss rate of requests for each content ($\rho = 1$)} \label{fig: dsn_loss_homo_B}
\end{figure}

Recall that the proportional-to-product placement is only optimal when the number of boxes $B\rightarrow\infty$. Figures~\ref{fig: dsn_loss_B} and \ref{fig: dsn_loss_homo_B} then show the impact of a finite $B$. In Figure~\ref{fig: dsn_loss_B}, as $B$ decreases, the system loss rate of every algorithms increases (compared to the two proportional-to-product strategies, UNIF is less sensitive to $B$). In Figure~\ref{fig: dsn_loss_homo_B}, non-homogeneity in the individual loss rates of requests for each content also reflects a deviation from the theoretical result (when $B\rightarrow \infty$, the loss rates of the requests for all the contents are proved to be identical). As expected, increasing the number of boxes (from $4000$ to $8000$) makes the system closer to the limiting scenario and the individual loss rates more homogeneous. Another observation is that as the popularity of a content decreases (in the figure, the contents are indexed in the descending order of their popularity), the individual loss rate increases. However, according to Figure \ref{fig: dsn_loss_rho}, those less popular contents do not affect the system loss rate much even if they incur high loss, since their weights $\{\hat{\nu}_c\}$ are also lower. 

In fact, if we choose a smaller content catalogue size $C$ or a larger cache size $M$, simulations  show the negative impact of a finite $B$ will be reduced (the figures are omitted here). This tells us that if $C$ scales with $B$ rather than being fixed, the proof of optimality under the loss network framework in Subsection \ref{sec: loss_network} is no longer valid and $M$ must be a bottleneck against the performance of the optimal algorithm. We will solve this problem by introducing a certain type of ``large catalogue model'' later in Section \ref{sec: large_catalog}.

\section{Optimal Content Placement in Pure Peer-to-Peer Networks}
\label{sec: content_place_pp2pn}
In the Pure P2P Network scenario, when box $b$ has a request for content $c$ which is currently in its own cache, a ``local service'' will be provided and no download bandwidth in the network will be consumed. 
To simplify our analysis, each request for a specific content is assumed to originate from a box chosen uniformly at random (this in particular assumes identical tastes of all users).

This means that the effective arrival rate of the requests for content $c$ which generates traffic load actually equals $\tilde{\nu}_c \triangleq \nu_c (1-\tilde{m}_c)$, where $\tilde{m}_c$ is defined as the fraction of boxes who have cached content $c$.
Let $\rho_c \triangleq \rho \hat{\nu}_c$ denote the traffic load generated by requests for content $c$, and $\lambda_c$ denote the  fraction of the system bandwidth resources used to  serve  requests for content $c$. Obviously, $\sum_{c\in\mathcal{C}}\lambda_c \le 1$.
The traffic load absorbed by the P2P system either via local services or via service from another box is then upper-bounded by
\begin{equation}
\tilde{\rho} = \sum_{c\in\mathcal{C}} \rho_c \tilde{m}_c + \left[\rho_c (1-\tilde{m}_c)\right]\wedge \lambda_c, \label{eq: thinned_traffic_load}
\end{equation}
where ``$\wedge$'' denotes the minimum operator.

We will use this simple upper bound to identify an optimal placement strategy in the present Pure P2P Network scenario. To this end, we shall establish  that our candidate placement strategy asymptotically achieves this performance bound, namely absorbs a portion $\tilde{\rho}$ in the limit where $B$ tends to infinity.

To find the optimal strategy, we introduce a variable $x_c \triangleq \left[\rho_c (1-\tilde{m}_c)\right] \wedge \lambda_c$ for all $c$. Note further that the fraction $\lambda_c$ is necessarily bounded from above by $\tilde{m}_c$, as only those boxes holding $c$ can devote their bandwidth to serving $c$.
 It is then easy to see that the quantity $\tilde{\rho}$ in (\ref{eq: thinned_traffic_load}) is no larger than the optimal value of the following  linear programming problem:\\\\
\textbf{[OPT 2]}
\begin{eqnarray*}
\max_{\mathbf{\tilde{m}},\bm{\lambda}, \mathbf{x}} & & \sum_{c\in\mathcal{C}} (\rho_c \tilde{m}_c + x_c) \\
s.t.              & & \forall~c\in \mathcal{C},~0\le \tilde{m}_c \le 1,~0\le \lambda_c \le \tilde{m}_c;\\
                  & & \forall~c\in \mathcal{C},~0\le x_c\le \lambda_c,~x_c \le \rho_c(1-\tilde{m}_c);\\
                  & & \sum_{c\in\mathcal{C}} \tilde{m}_c = M,~\sum_{c\in\mathcal{C}} \lambda_c \le 1.
\end{eqnarray*}

The following theorem gives the structure of an optimal solution to OPT 2, and as a result suggests an optimal placement strategy.\\

\begin{theorem}
\label{thm: wf_with_reservation}
Assume that $\{\hat{\nu}_c\}$ are ranked in descending order. The following solution solves OPT 2:
\begin{itemize*}
\item For $1\le c \le M-1$, $\tilde{m}_c = 1$, $\lambda_c = x_c = 0$.
\item For $M \le c \le c^*$, $\tilde{m}_c = \lambda_c = x_c = \rho_c /(1+\rho_c)$, where $c^*$ satisfies that
        $$\sum_{c=M}^{c^*} \frac{\rho_c}{1+\rho_c} \le 1,~\mbox{but}~\sum_{c=M}^{c^*+1} \frac{\rho_c}{1+\rho_c} > 1.$$
\item For $c=c^*+1$, $\tilde{m}_c = \lambda_c = x_c = 1 - \sum_{c=M}^{c^*} \tilde{m}_c$.
\item For $c^*+2 \le c \le C$, $\tilde{m}_c = \lambda_c = x_c = 0$.\hfill $\diamond$
\end{itemize*}
\end{theorem}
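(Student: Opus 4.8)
The plan is to certify optimality of the proposed point by the same KKT / Lagrangian-duality device used for OPT 1: attach a multiplier to each of the two \emph{coupling} constraints $\sum_c\tilde{m}_c=M$ and $\sum_c\lambda_c\le 1$, and exhibit multiplier values for which the proposed primal point satisfies primal feasibility, dual feasibility and complementary slackness. Before doing so I would simplify OPT 2. Since $x_c$ enters the objective with a positive coefficient only through $x_c\le\lambda_c$ and $x_c\le\rho_c(1-\tilde{m}_c)$, any optimum has $x_c=\lambda_c\wedge\rho_c(1-\tilde{m}_c)$; moreover overwriting $\lambda_c$ by $x_c$ keeps feasibility and the objective while only slackening $\sum_c\lambda_c\le1$, so without loss of generality $x_c=\lambda_c\le\rho_c(1-\tilde{m}_c)$. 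OPT 2 thus collapses to a transparent program in the pairs $(\tilde{m}_c,\lambda_c)$ with objective $\sum_c(\rho_c\tilde{m}_c+\lambda_c)$ and local constraints $0\le\lambda_c\le\tilde{m}_c\le1$, $\lambda_c\le\rho_c(1-\tilde{m}_c)$, coupled only through $\sum_c\tilde{m}_c=M$ and $\sum_c\lambda_c\le1$.

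Next I would introduce a free multiplier $\mu$ for the storage equality and $\eta\ge0$ for the bandwidth inequality and write the weak-duality estimate valid at every feasible point, $\sum_c(\rho_c\tilde{m}_c+\lambda_c)\le\mu M+\eta+\sum_c g_c(\tilde{m}_c,\lambda_c)$ with $g_c(\tilde{m}_c,\lambda_c)=(\rho_c-\mu)\tilde{m}_c+(1-\eta)\lambda_c$, obtained by adding $\mu(M-\sum_c\tilde{m}_c)=0$ and $\eta(1-\sum_c\lambda_c)\ge0$. Each per-content polytope has only the three vertices $(0,0)$, $(1,0)$ and $(\rho_c/(1+\rho_c),\rho_c/(1+\rho_c))$, so $\max g_c=\max\{0,\ \rho_c-\mu,\ \tfrac{\rho_c}{1+\rho_c}((1+\rho_c)-(\mu+\eta))\}$. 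The whole argument then reduces to choosing $\mu,\eta$ so that, content by content, the proposed solution sits at a maximizer of $g_c$ and so that the bound is attained.

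The decisive choice is $\mu+\eta=1+\rho_{c^*+1}$, with $\mu$ taken anywhere in $[(1+\rho_{c^*+1})\tfrac{\rho_M}{1+\rho_M},\ (1+\rho_{c^*+1})\tfrac{\rho_{M-1}}{1+\rho_{M-1}}]$, a nonempty interval because $\rho_{M-1}\ge\rho_M$. On the edge $\lambda=\tilde{m}$ the map $g_c$ then has slope $(1+\rho_c)-(\mu+\eta)=\rho_c-\rho_{c^*+1}$: positive for $c\le c^*$ (the partial vertex beats $(0,0)$), exactly zero for the boundary content $c^*+1$ (so the whole edge, including the interior point $r=1-\sum_{c=M}^{c^*}\rho_c/(1+\rho_c)$, maximizes $g_{c^*+1}$), and negative for $c\ge c^*+2$ (so $(0,0)$ wins and the content is dropped). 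The remaining comparison for $c\le c^*$, partial vertex versus full-replication vertex $(1,0)$, is governed by $h(\rho):=\mu-(1+\rho_{c^*+1})\rho/(1+\rho)$, which is decreasing in $\rho$; hence $(1,0)$ wins for the largest popularities and the partial vertex for smaller ones, and the chosen interval for $\mu$ puts this crossover exactly between $\rho_{M-1}$ and $\rho_M$, i.e. between contents $M-1$ and $M$, reproducing the stated structure. Since the proposed point then places every content at a maximizer of its $g_c$, uses storage exactly $M$, and saturates $\sum_c\lambda_c=1$ (making the $\eta$-term tight), its objective equals the upper bound and optimality follows; the leftover checks ($(1,0)$ beating $(0,0)$ in the full region, and $\mu,\eta\ge0$) are automatic from the ordering of $\{\rho_c\}$.

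I expect the crux to be the boundary content $c^*+1$: it is served at an \emph{interior} point of the edge $\lambda=\tilde{m}$ rather than at a vertex, so a pure vertex comparison cannot certify it, and tightness there hinges on the knife-edge identity $\mu+\eta=1+\rho_{c^*+1}$. The real work is to show that a single pair $(\mu,\eta)$ can simultaneously satisfy this identity and the inequalities coming from the full region, the partial region and the excluded region; this is exactly where the descending order of $\{\hat{\nu}_c\}$ and the defining inequalities for $c^*$ enter, guaranteeing the displayed $\mu$-interval is nonempty and compatible with $\eta\in[0,1]$. Everything else is routine vertex bookkeeping. Finally I would dispatch the degenerate cases separately, namely $M=1$ (the full-replication region is empty), $r=0$ (the storage and bandwidth budgets align so $c^*+1$ collapses), and the regime where the bandwidth budget never binds so that $c^*$ is not defined.
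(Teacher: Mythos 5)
Your proposal is correct, and it certifies optimality with exactly the same dual certificate as the paper: your knife-edge identity $\mu+\eta=1+\rho_{c^*+1}$ is the paper's complementary-slackness relation $\gamma=\rho_{c^*+1}+1-\eta$ for its storage multiplier $\gamma$ and bandwidth multiplier $\eta$, and your admissible interval $\mu\in\bigl[(1+\rho_{c^*+1})\tfrac{\rho_M}{1+\rho_M},\,(1+\rho_{c^*+1})\tfrac{\rho_{M-1}}{1+\rho_{M-1}}\bigr]$ is precisely the image of the paper's interval $\tfrac{\rho_{c^*+1}+1}{\rho_{M-1}+1}\le\eta\le\tfrac{\rho_{c^*+1}+1}{\rho_M+1}$ under $\gamma=1+\rho_{c^*+1}-\eta$. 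Where you differ is in the decomposition: the paper dualizes \emph{every} constraint, introducing five families of per-content multipliers $(u_c,v_c,y_c,z_c,w_c)$ in addition to $(\eta,\gamma)$, and then mechanically solves the stationarity and complementary-slackness equations region by region; you first eliminate $x_c$ (observing $x_c=\lambda_c\wedge\rho_c(1-\tilde m_c)$ and that $\lambda_c$ may be lowered to $x_c$ without loss), dualize only the two coupling constraints, and resolve each content by maximizing a linear function over its triangular local polytope with vertices $(0,0)$, $(1,0)$ and $(\rho_c/(1+\rho_c),\rho_c/(1+\rho_c))$. Your route buys a cleaner weak-duality argument and a transparent geometric explanation of the three regimes (and of why content $c^*+1$ sits at an interior point of an edge on which the reduced cost vanishes); the paper's route buys a fully explicit KKT multiplier vector with no preliminary reduction of the program. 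Your vertex bookkeeping checks out (e.g., $(1,0)$ beating $(0,0)$ for $c\le M-1$ reduces to $\mu\le\rho_{M-1}$, which follows from $\rho_{c^*+1}\le\rho_{M-1}$), and your flagged degenerate cases match the paper's treatment of $\tilde m_{c^*+1}=0$.
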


The proof consists in checking that the KKT conditions are met for the above candidate solution. Details are given in Appendix \ref{sec: proof_wf_reserve}.

The above optimal solution suggests the following placement strategy:

\noindent\hrulefill\\
\textbf{``Hot-Warm-Cold'' Content Placement Strategy}

\noindent\hrulefill\\
Divide the contents into three different classes according to their popularity ranking (in  descending order):
\begin{itemize*}
\item \textbf{Hot:} The $M-1$ most popular contents. At each box, $M-1$ cache slots are reserved for them to make sure that requests for these contents are always met via local service.
\item \textbf{Warm:} The contents with indices from $M$ to $c^*+1$ (or $c^*$ if $\sum_{c=M}^{c^*}\tilde{m}_c = 1$). For these contents, a fraction $\tilde{m}_c$ of all the boxes will store content $c$ in their remaining one cache slots, where the value of $\tilde{m}_c$ is given in Theorem~\ref{thm: wf_with_reservation}. All requests for these contents (except $c^*+1$ if it is classified as ``warm'') can be served, at the expense of all bandwidth resources.
\item \textbf{Cold:} The other less popular contents are not cached at all.
\end{itemize*}

\noindent\hrulefill \vspace{0.1in}

\begin{remark}
The requests for the $c^*$ most popular contents (``hot'' contents and ``warm'' contents except content $c^*+1$) incur zero loss, while the requests for the $C-c^*-1$ least popular contents incur $100\%$ loss. There is a partial loss in the requests for content $c^*+1$ if $\sum_{c=M}^{c^*}\tilde{m}_c < 1$.

Note that the placement for ``warm'' contents looks like the ``water-filling'' solution in the problem of allocating transmission powers onto different OFDM channels to maximize the overall achievable channel capacity in the context of wireless communications \cite{TseVis_Wireless05}.\hfill $\diamond$
\end{remark}

\

Under this placement strategy, the maximum upper bound on the absorbed traffic load reads
\begin{equation*}\label{eq: absorbed_traffic_ub}
\tilde{\rho} = \sum_{c=1}^{c^*}\rho_c + (\rho_{c^*+1}+1)\left(1 - \sum_{c=M}^{c^*} \frac{\rho_c}{1+\rho_c}\right).
\end{equation*}
We then have the following corollary:
\begin{corollary}
Considering the large system limit $B\to\infty$, with fixed catalogue and associated normalized popularities $\{\hat{\nu_c}\}$ as considered in Subsection~\ref{sec: loss_network}, the proposed ``hot-warm-cold'' placement strategy achieves an asymptotic fraction of absorbed load equal to the above upper bound $\tilde{\rho}$, and is hence optimal in this sense.\hfill$\diamond$
\end{corollary}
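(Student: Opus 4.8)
The plan is to exploit a decoupling that the ``hot-warm-cold'' placement induces on the residual download loss network, and then invoke the large-capacity (Erlang) asymptotics already used in Section~\ref{sec: loss_network}. The key structural observation is that the $M-1$ hot contents are cached at every box, so every request for a hot content is served locally and contributes \emph{no} download traffic; meanwhile, since the warm fractions $\tilde{m}_c$ sum to exactly one, each box devotes its single remaining cache slot to exactly one warm content. Hence the only download a box can actually serve is for the single warm content it stores, so the boxes holding distinct warm contents form disjoint server pools. As a result, the Hall feasibility condition \eqref{eq: feasible_hall}, restricted to the residual download requests, factorizes: because the pools are disjoint, $|\{b:\mathcal{S}\cap\mathcal{J}_b\ne\emptyset\}|=\sum_{c\in\mathcal{S}}|\{b:\text{free slot}=c\}|$ for any set $\mathcal{S}$ of warm contents, so a vector of concurrent downloads is feasible if and only if $n_c\le U|\{b:c\in\mathcal{J}_b\}|$ holds \emph{separately} for each warm content $c$. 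In the limit the download system is therefore a collection of independent Erlang loss systems, one per warm content.

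Next I would analyse each such subsystem. Fix a warm content $c$, i.e. $M\le c\le c^*+1$. By the strong law of large numbers \eqref{eq: SLLN}, the number of boxes storing $c$ concentrates on $\tilde{m}_c B$, so the associated Erlang system has $K_c\sim\tilde{m}_c BU$ servers and offered download load $\tilde{\nu}_c=\rho_c(1-\tilde{m}_c)BU$, giving a per-server load $a_c/K_c\to\rho_c(1-\tilde{m}_c)/\tilde{m}_c$. Substituting the values of Theorem~\ref{thm: wf_with_reservation}, this per-server load equals exactly $1$ for $M\le c\le c^*$, and, by the defining inequality of $c^*$, is strictly greater than $1$ for $c=c^*+1$. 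I would then quote the large-capacity scaling result underlying \eqref{eq: accept_rate} (equivalently, Erlang-B asymptotics with $K_c\to\infty$): when the per-server load is at most $1$ the blocking probability vanishes and the carried download load converges to $\tilde{\nu}_c$, whereas when it exceeds $1$ the carried load saturates at the capacity $K_c$. In terms normalized by $BU$, this yields $x_c=\rho_c(1-\tilde{m}_c)=\lambda_c$ for the warm contents $M\le c\le c^*$, and $x_c=\tilde{m}_c=\lambda_c$ for content $c^*+1$ -- precisely the coordinates of the optimal OPT 2 solution.

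Finally I would assemble the total absorbed load. Local service alone absorbs $\rho_c$ for each hot content and $\rho_c\tilde{m}_c$ for each warm content, while the download subsystems contribute the $x_c$ computed above; cold contents contribute nothing. Summing over all content classes reproduces exactly the value $\tilde{\rho}$ displayed just before the corollary, which by the argument around \eqref{eq: thinned_traffic_load} together with Theorem~\ref{thm: wf_with_reservation} is the optimal value of OPT 2 and hence an upper bound on the absorbable load. Matching the achieved load to this bound establishes asymptotic optimality.

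The step I expect to be most delicate is the critical case, namely the warm contents whose per-server load is exactly $1$: here one must confirm that blocking still vanishes -- it does, at rate $O(K_c^{-1/2})$, consistent with the $O(B^{-1/2})$ term in \eqref{eq: accept_rate} -- rather than settling at a positive constant. A secondary technical point is to make the decoupling rigorous despite the randomness of the sampled placement: one must check that the $\Theta(\sqrt{B})$ fluctuations in each pool size, and the corresponding $o(1)$ perturbations of the per-server loads, do not push the critical subsystems onto the wrong side of the threshold in the limit. This follows from \eqref{eq: SLLN} together with the continuity of the Erlang blocking limit away from load one and its controlled $O(K_c^{-1/2})$ behaviour at load one.
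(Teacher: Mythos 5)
Your proposal is correct and takes essentially the same route as the paper's proof: hot contents are absorbed entirely by local service, cold contents are rejected, and since each box's single remaining slot holds exactly one warm content the download traffic decouples into independent one-dimensional Erlang loss systems, which are critically loaded (blocking vanishing) for $c=M,\dots,c^*$ and overloaded (carried load saturating at capacity $\tilde{m}_{c^*+1}BU$) for $c^*+1$, summing to $\tilde{\rho}$. Your added remarks on the Hall-condition factorization, the $O(K_c^{-1/2})$ blocking at critical load, and the placement fluctuations are sound refinements of details the paper leaves implicit.
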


\

\begin{proof}
With the proposed placement strategy, hot (respectively, cold) contents never trigger accepted requests, since all incoming requests are handled by local service (respectively, rejected). For warm contents, because each box holds only one warm content, it can only handle requests for that particular warm content. As a result, the processes of ongoing requests for distinct warm contents evolve independently of one another. For a given warm content $c$, the corresponding number of ongoing requests behaves as a simple one-dimensional loss network with arrival rate $\nu_c(1-\tilde{m}_c)$ and service capacity $\tilde{m}_c BU$. For $c=M,\ldots,c^*,$ one has $\tilde{m}_c=\rho_c/(1+\rho_c)$ where $\rho_c=\nu_c/(BU)$, so both the arrival rate and the capacity of the corresponding loss network equal $\tilde{m}_c BU$. The asymptotic acceptance probability as $B\to\infty$ then converges to $1$ and the accepted load due to both local service and services from other boxes converges to $\rho_c$. For content $c^*+1$ (if $\tilde{m}_{c^*+1}>0$), the corresponding loss network has arrival rate $\nu_{c^*+1}(1-\tilde{m}_{c^*+1})$ and service capacity $ \tilde{m}_{c^*+1}BU$. Then, in the limit $B\to\infty$, the accepted load (due to both local services and services from other boxes) reads $\rho_{c^*+1}\tilde{m}_{c^*+1}+\tilde{m}_{c^*+1}$ (which is actually smaller than $\rho_{c^*+1}$). Summing the accepted loads of all contents yields the result.
\end{proof}

\section{Large Catalogue Model}
\label{sec: large_catalog}
Keeping the many-user asymptotic, we now consider an alternative model of content catalogue, which we term the ``large catalogue'' scenario. The set of contents $\mathcal{C}$ is divided into a fixed number of ``content classes'',  indexed by $i\in{\mathcal I}$. In class $i$, all the contents have the same popularity (arrival rate) $\nu_i$. The number of contents within class $i$ is assumed to scale in proportion to the number of boxes $B$, i.e., class $i$ contains $\alpha_i B$ contents for some fixed scaling factor $\alpha_i$. We further define $\alpha \triangleq \sum_i \alpha_i$.
With the above assumptions, the system traffic load $\rho$ in equation~(\ref{eq: rho}) reads
\begin{equation}\label{load-large_catalogue}
\rho=\frac{1}{U}\sum_{i\in \mathcal{I}}\alpha_i \nu_i.
\end{equation}
The primary motivation for this model is mathematical convenience: by limiting the number of popularity values we limit the ``dimensionality'' of the request distribution, even though we now allow for a growing number of contents. It can also be justified as an approximation, that would result from batching into a single class all contents with a comparable popularity. Such classes can also capture the movie type (e.g. thriller, comedy) and age (assuming popularity decreases with content age).

We use $\hat{\upsilon}_i$ to denote the normalized popularity  of content class $i\in \mathcal{I}$ and it reads $\sum_{i\in \mathcal{I}}\hat{\upsilon}_i = 1$. It is reasonable to regard each $\hat{\upsilon}_i$ as fixed. $\hat{\nu}_i \triangleq \hat{\upsilon}_i / (\alpha_i B)$ represents the normalized popularity of a specific content in class $i$, which decreases as the number of contents in this class $\alpha_i B$ increases, since users now have more choices within each class. In practice, an online video provider company which uses the Distributed Server Network architecture adds both boxes and available movies of each type to attract more user traffic, under a constraint of a maximum tolerable traffic load $\rho$.

Returning to the Distributed Server Network model of Section \ref{sec: content_place_dsn}, we consider the following questions: What amount of storage is required to ensure that memory space is not a bottleneck? Is the proportional-to-product placement strategy still optimal under the large-catalogue scaling?

\subsection{Necessity of Unbounded Storage}
We first establish that bounded storage will strictly constrain utilization of bandwidth resources. To this end we need the following lemma:\\

\begin{lemma}\label{lemma: necessity_unbounded_storage}
Consider the system under large catalogue scaling, with fixed weights $\alpha_i$ and cache size $M$ per box. Define $M'\triangleq \lceil 2M/\alpha\rceil$. Then

(i) More than half of the contents are replicated at most $M'$ times, and

(ii) For each of these contents, the loss probability is at least $E(\inf_i \nu_i, M' U)>0$, where $E(\cdot,\cdot)$ is the Erlang function \cite{kel91} defined as:
    $$E(\nu,C) \triangleq\frac{\nu^C}{C!}\left[\sum_{n=1}^C \frac{\nu^n}{n!}\right]^{-1}.$$
    \hfill $\diamond$
\end{lemma}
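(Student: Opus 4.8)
The plan is to handle the two parts separately: part (i) is a counting (pigeonhole) argument, and part (ii) compares the request process of a lightly replicated content with an Erlang loss system.

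For (i) I would count replicas. Let $r_c = |\{b : c \in \mathcal{J}_b\}|$ be the number of boxes caching content $c$. Counting content--box incidences two ways,
\[
\sum_{c \in \mathcal{C}} r_c = \sum_{b \in \mathcal{B}} |\mathcal{J}_b| \le MB,
\]
since each of the $B$ boxes stores at most $M$ contents, while $|\mathcal{C}| = \sum_{i} \alpha_i B = \alpha B$. Writing $N_{>} = |\{c : r_c > M'\}|$ and using $r_c \ge M'+1$ on that set,
\[
N_{>}\,(M'+1) \le \sum_{c} r_c \le MB, \qquad\text{so}\qquad N_{>} \le \frac{MB}{M'+1} < \frac{MB}{2M/\alpha} = \frac{\alpha B}{2} = \frac{|\mathcal{C}|}{2},
\]
the strict inequality holding because $M' = \lceil 2M/\alpha\rceil \ge 2M/\alpha$ gives $M'+1 > 2M/\alpha$. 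Hence strictly more than half of the contents have $r_c \le M'$, which is (i).

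For (ii), fix one such content $c$, with replication $r_c \le M'$ and arrival rate $\nu_c \ge \inf_i \nu_i$. Applying the feasibility characterization \eqref{eq: feasible_hall} to the singleton $\mathcal{S} = \{c\}$ shows $n_c \le U r_c \le M' U$ at all times, since requests for $c$ can only be served by the at most $M'$ boxes holding it, each carrying at most $U$ streams. The most favourable conceivable situation for $c$ is that all $U r_c$ of these slots are permanently dedicated to $c$, in which case $n_c$ is the occupancy of an $M/M/Ur_c/Ur_c$ Erlang loss system at offered load $\nu_c$, with blocking probability $E(\nu_c, U r_c)$. I would make this rigorous by proving that the stationary law of $n_c$ in the true system is stochastically dominated by the occupancy $\hat n_c$ of that Erlang system, and then convert the comparison of occupancies into a comparison of loss probabilities.

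The domination I would extract directly from the product-form law \eqref{stat_1}. Reducing any coordinate preserves every constraint in \eqref{eq: feasible_hall}, so the feasible set is downward closed, i.e. $\{\mathbf{n}\ \text{feasible}\}$ is a decreasing event for the product Poisson measure $\mu = \bigotimes_{c} \mathrm{Poisson}(\nu_c)$, and $\pi$ is $\mu$ conditioned on it. Split feasibility as $\{n_c \le U r_c\} \cap F'$, where $F'$ gathers the remaining (also decreasing) constraints. Conditioning $\mu$ on the single product constraint $\{n_c \le U r_c\}$ keeps a product measure whose $c$-marginal is exactly the truncated Poisson, i.e. the law of $\hat n_c$; since product measures are FKG and $\{n_c \ge t\}$ is increasing while $F'$ is decreasing,
\[
\mathbb{P}_\pi(n_c \ge t) = \mu\big(n_c \ge t \;\big|\; \{n_c \le U r_c\}\cap F'\big) \le \mu\big(n_c \ge t \;\big|\; n_c \le U r_c\big) = \mathbb{P}(\hat n_c \ge t),
\]
so $n_c \preceq_{\mathrm{st}} \hat n_c$. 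Finally, flow balance in stationarity gives the carried load $\mathbb{E}_\pi[n_c] = \nu_c (1 - L_c)$, with $L_c$ the loss probability of $c$, and likewise $\mathbb{E}[\hat n_c] = \nu_c\big(1 - E(\nu_c, U r_c)\big)$; as $\mathbb{E}_\pi[n_c] \le \mathbb{E}[\hat n_c]$ we obtain $L_c \ge E(\nu_c, U r_c)$. Monotonicity of the Erlang formula (increasing in offered load, decreasing in the number of servers) together with $r_c \le M'$ and $\nu_c \ge \inf_i \nu_i$ then yields $L_c \ge E(\inf_i \nu_i, M' U) > 0$.

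The main obstacle is precisely this stochastic domination of the stationary marginal of $n_c$: because $n_c(t)$ is not Markov on its own (acceptance depends on the entire configuration), a naive sample-path coupling is delicate, as the two systems accept different customers and hence have mismatched service completions. Routing the argument through the explicit product form and the FKG/negative-association inequality, and pairing it with the rate-conservation identity $\mathbb{E}_\pi[n_c] = \nu_c(1-L_c)$, is the step I expect to demand the most care; the counting bound, the monotonicity of Erlang's formula, and its strict positivity at finite capacity are then routine.
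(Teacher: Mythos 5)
Your part (i) is the same counting/pigeonhole argument as the paper's, and it is correct. For part (ii) you arrive at the same bound $L_c \ge E(\nu_c, U r_c) \ge E(\inf_i\nu_i, M'U)$ by a genuinely different route. The paper proves the generic loss-network inequality $p_j \ge E(\nu_j, C'_j)$ dynamically: it builds an explicit Markov coupling of the full system with an auxiliary one-dimensional loss network of capacity $C'_j$ (the bottleneck link), shows the transition rates preserve $X_j(t)\le X'_j(t)$ pathwise, deduces $D_j(t)\le D'_j(t)$ and hence $L_j(t)\ge L'_j(t)$, and concludes by the ergodic theorem. You instead work statically from the product form \eqref{stat_1}: you observe that the feasible set \eqref{eq: feasible_hall} is downward closed, isolate the singleton constraint $n_c\le U r_c$, and use the Harris/FKG inequality for the (still product) conditioned Poisson measure to get stochastic domination of the stationary marginal of $n_c$ by the truncated Poisson; Little's law ($\EE_\pi[n_c]=\nu_c(1-L_c)$, and likewise for the Erlang system) then converts the occupancy comparison into the loss-probability comparison. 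Both arguments are sound. The paper's coupling yields a pathwise statement and generalizes to settings where one does not want to invoke the stationary law, but it requires exponential holding times and a careful verification that the joint rates are consistent with both marginals (it leans on a cited theorem for this). Your route avoids any coupling construction and is insensitive to the holding-time distribution, at the cost of relying on reversibility/product form and on the Harris inequality; the Little's-law step that turns $n_c \preceq_{\mathrm{st}} \hat n_c$ into $L_c \ge E(\nu_c, Ur_c)$ is the one genuinely nonstandard move, and you have justified it correctly.
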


\begin{proof}
We first prove part (i). Note that the total number of content replicas in the system equals $B\!M$. Thus, denoting by $f$ the fraction of contents replicated at least $M'+1$ times, it follows that $f \alpha B (M'+1)\le BM$, which in turn yields
$$f \le \frac{M}{\alpha\left(\lceil 2M/\alpha\rceil + 1\right)} \le \frac{M}{2M+\alpha} < \frac{1}{2},$$
which implies statement (i).

To prove part (ii), we establish the following general property for a loss network (equivalent to our original system) with call types $j\in{\mathcal J}$, corresponding arrival rates $\nu_j$, and  capacity (maximal number of competing calls) $C_l$ on link $\ell$ for all $\ell\in \mathcal{L}$. We use $\ell\in j$ to indicate that the route for calls of type $j$ comprises link $\ell$. Denoting the loss probability of calls of type $j$ in such a loss network as $p_j$, we then want to prove
\begin{equation}\label{gen_bound}
p_j
\ge E(\nu_j,C'_j),
\end{equation}
where $C'_j \triangleq \min_{\ell\in j} C_{\ell}$, i.e., the capacity of the bottleneck link on the route for calls of type $j$.

Note that the RHS of the above inequality is actually the loss probability of a loss network with only calls of type $j$ and capacity $C'_j$. Fixing index $j$, we define this loss network as an auxiliary system and consider the following coupling construction which allows us to deduce inequality \eqref{gen_bound}: Let $X_k$ be the number of active calls of type $k$ in the original system for all $k$, and let $X'_j$ denote the number of active calls of type $j$ in the auxiliary system. Initially, $X_j(0) = X'_j(0)$. The non-zero transition rates for the joint process $(\{X_k\}_{k\in K},X'_j)$ are given by
$$
\begin{array}{ll}
k\ne j:\; X_k\to X_k+1&\!\!\!\!\hbox{at rate }\displaystyle\nu_k \prod_{\ell \in j}\II{\sum_{k \ni \ell}X_k<C_{\ell}},\\
k\ne j:\; X_k\to X_k-1&\hbox{at rate }X_k,\\
(X_j,X'_j)\to(X_j+1,X'_j+1)&\hbox{at rate } \nu_j^{both},\\
(X_j,X'_j)\to(X_j+1,X'_j)&\hbox{at rate }\nu_j^{ori},\\
(X_j,X'_j)\to(X_j,X'_j+1)&\hbox{at rate }\nu_j^{a
ux},\\
(X_j,X'_j)\to(X_j-1,X'_j-1)&\hbox{at rate }X_j,\\
(X_j,X'_j)\to(X_j,X'_j-1)&\hbox{at rate }\nonnegpart{X'_j-X_j},
\end{array}
$$
where 
\begin{eqnarray*}
\nu_j^{both} &\triangleq & \nu_j \II{X'_j<C'_j}\cdot\prod_{\ell \in j} \II{\sum_{k \ni \ell}X_k<C_{\ell}},\\
\nu_j^{ori} &\triangleq & \nu_j \II{X'_j=C'_j}\cdot\prod_{\ell \in j} \II{\sum_{k \ni \ell}X_k<C_{\ell}},\\
\nu_j^{aux} &\triangleq & \nu_j \II{X'_j<C'_j}\cdot\II{\exists \ell \in j \mbox{ s.t. } \sum_{k\in \ell}X_k=C_{\ell}}.
\end{eqnarray*}
It follows from Theorem 8.4 
in \cite{DraMas_epidemics10} that $\{X_k\}$ is indeed a loss network process with the original dynamics, and that $X'_j$ is a one-dimensional loss network with capacity $C'_j$ and arrival rate $\nu_j$. From the construction, we can see that all transitions preserve the inequality $X_j(t)\le X'_j(t)$ for all $t\ge 0$, due to the following reason: Once $X_j$ increases by 1, $X'_j$ either increases by 1 or equals the capacity limit $C'_j$, and for the latter case, the  corresponding transition rate $\nu_j^{ori}$ implies that $X_j \le C'_j = X'_j$. Similarly, once $X'_j$ decreases by 1, either $X_j$ also decreases by 1, or in the case that $X_j$ does not decrease, it must be that the transition rate $X'_j-X_j$ is strictly positive. In any case, the above inequality is preserved. 

We further let $A_j(t)$, $A'_j(t)$ denote the number of type $j$ external calls, $L_j(t)$, $L_j'(t)$ the number of type $j$ call rejections, and $D_j(t)$, $D_j'(t)$ the number of type $j$ call completions, respectively in the original and auxiliary systems, during time interval $[0,t]$. It follows from our construction that whenever the service for a call of type $j$ completes in the original system, the service for a call of type $j$ also completes in the auxiliary system, hence $D_j(t)\le D'_j(t)$ for all $t\ge 0$. Since $X_j(t) = A_j(t) - D_j(t) - L_j(t)$, $X'_j(t) = A'_j(t) - D'_j(t) - L'_j(t)$ and $A_j(t) = A'_j(t)$, we have $L_j(t) \ge L_j'(t)$. Upon dividing this inequality by $A(t)$ and letting $t$ tend to infinity, one retrieves the announced inequality (\ref{gen_bound}) by the ergodic theorem. 

Back to the context of our P2P system, for those contents which are replicated at most $M'$ times (i.e., the contents considered in part (i)), the rejection rate of content $c$ of type $j$ reads $p_j \ge E(\inf_i\nu_i, C'_j) \ge E(\inf_i\nu_i, M'U).$
\end{proof}
\

The above lemma readily implies the following corollary:
\begin{corollary}
\label{col_1}
Under the assumptions in Lemma~\ref{lemma: necessity_unbounded_storage}, 
The overall rejection probability is at least $\frac{1}{2}E(\min_i \nu_i, M' U)$. 
Indeed, for bounded $M$, $M'$ is also bounded, and $E(\min_i \nu_i,M' U)$ is bounded away from $0$.  \hfill$\diamond$
\end{corollary}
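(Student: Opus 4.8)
The plan is to feed the two conclusions of Lemma~\ref{lemma: necessity_unbounded_storage} directly into a lower bound on the aggregate loss and then to observe that the resulting bound does not depend on $B$. First I would fix an arbitrary placement and let $\mathcal{S}\subseteq\mathcal{C}$ denote the set of contents replicated at most $M'$ times; since $|\mathcal{C}|=\alpha B$, part~(i) of the lemma gives $|\mathcal{S}|>\frac{1}{2}|\mathcal{C}|$. Part~(ii) supplies a uniform per-content floor $p_c\ge E(\inf_i\nu_i,M'U)$ for every $c\in\mathcal{S}$, and $\inf_i\nu_i=\min_i\nu_i$ since there are finitely many classes.

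Next I would assemble these into the overall rejection probability. Reading it as the average of the per-content loss probabilities over the catalogue, I would discard the contributions outside $\mathcal{S}$ and apply the floor from part~(ii):
\begin{equation*}
\frac{1}{|\mathcal{C}|}\sum_{c\in\mathcal{C}}p_c\ \ge\ \frac{|\mathcal{S}|}{|\mathcal{C}|}\,E(\min_i\nu_i,M'U)\ >\ \frac{1}{2}\,E(\min_i\nu_i,M'U),
\end{equation*}
so that the factor $\frac{1}{2}$ is inherited directly from the counting estimate $|\mathcal{S}|>|\mathcal{C}|/2$.

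The final step is the qualitative conclusion. Since $\alpha=\sum_i\alpha_i$ is fixed and $M$ is bounded uniformly in $B$, the quantity $M'=\lceil 2M/\alpha\rceil$ is bounded, so $M'U$ takes values in a fixed finite set of capacities; as the popularities $\nu_i$ are fixed and positive and $E(\nu,C)$ is strictly positive for every finite capacity $C$ and positive load $\nu$, the value $E(\min_i\nu_i,M'U)$ is bounded below by a positive constant independent of $B$. The overall rejection probability therefore cannot vanish as $B\to\infty$, which is the content of the corollary.

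The step I expect to demand the most care is the passage from ``more than half of the contents'' (a statement about counts) to ``at least one half of the rejection probability.'' The clean constant $\frac{1}{2}$ is immediate when the overall rejection probability is read as the unweighted average of the per-content loss probabilities; if instead one insists on the traffic-weighted fraction of lost requests, the contents in $\mathcal{S}$ may individually be less popular, and one only obtains a bound of the form $\big(\alpha\min_i\nu_i/(2\sum_i\alpha_i\nu_i)\big)E(\min_i\nu_i,M'U)$. I would remark that either form suffices for the paper's purpose, since both are strictly positive constants independent of $B$, so the impossibility of vanishing loss under bounded storage holds regardless of the convention adopted.
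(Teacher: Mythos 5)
Your proposal is correct and follows essentially the same route as the paper, which proves this corollary in one line by combining part~(i) and part~(ii) of Lemma~\ref{lemma: necessity_unbounded_storage} exactly as you do. Your closing caveat is well taken: the constant $\tfrac12$ as stated corresponds to the unweighted average of per-content loss probabilities, and for the traffic-weighted loss fraction one would instead obtain the constant $\bigl(\alpha\min_i\nu_i/(2\sum_i\alpha_i\nu_i)\bigr)E(\min_i\nu_i,M'U)$, which is still a positive bound independent of $B$ and hence equally sufficient for the paper's conclusion that bounded storage forces non-vanishing loss.
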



Thus, even when the system load $\rho$ is strictly less than 1, with bounded $M$ there is a non-vanishing fraction of rejected requests, hence a suboptimal use of bandwidth.

\subsection{Efficiency of Proportional-to-Product Placement}
\label{sec: large_catalog_content_placement}
We consider the following \textbf{``Modified Proportional-to-Product Placement"}: Each of the $M$ storage slots at a given box $b$ contains a randomly chosen content. The probability of selecting one particular content $c$ is $\nu_i/(\rho BU)$ if it belongs to class $i$.
In addition, we assume that the selections for all such $M\!B$ storage slots are done independently of one another.\\

\begin{remark}
This content placement strategy can be viewed as a ``balls-and-bins'' experiment. All the $M\!B$ cache slots in the system are regarded as balls, and all the $|\mathcal{C}|$ ($=\sum_{i}\alpha_i B$) contents are regarded as bins. We throw each of the $MB$ balls at random among all the $|\mathcal{C}|$ bins. Bin $c$ (corresponding to content $c$ which belongs to class $i$) will be chosen with probability $\nu_i/(\rho BU)$. Alternatively, the resulting allocation can be viewed as a bipartite  random graph connecting boxes to contents. \hfill{$\diamond$}
\end{remark}
\

Note that this strategy differs from the ``proportional-to-product'' placement strategy proposed in Section~\ref{sec: content_place_dsn}, in that it allows for multiple copies of the same content at the same box. However, by the birthday paradox, we can prove the following lemma which shows that up to a negligible fraction of boxes, the above content placement does coincide with the proportional-to-product strategy.\\

\begin{lemma}
\label{lemma: suff_for_diff_replica}
By using the above content placement strategy, at a certain box, if $M\ll  \sqrt{(\min_i \alpha_i)B}$,
\begin{equation}
\Pr(\mbox{all the $M$ cached contents are different}) \approx 1. \label{eq: almost_all_different}
\end{equation}
\hfill $\diamond$
\end{lemma}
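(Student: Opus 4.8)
The plan is to treat the $M$ cache slots of a fixed box as $M$ independent draws from the content distribution, in which a content $c$ lying in class $i$ is selected with probability $p_c = \nu_i/(\rho B U)$. A first sanity check is that this is a genuine distribution: summing over all contents gives $\sum_c p_c = \frac{1}{\rho U}\sum_i \alpha_i \nu_i = 1$ by \eqref{load-large_catalogue}. The event that the $M$ cached contents are not all distinct is exactly the event that some pair of slots draws the same content, so \eqref{eq: almost_all_different} is a birthday-paradox estimate, and I would attack it with a union bound over the $\binom{M}{2}$ unordered pairs of slots.

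First I would compute the collision probability of one fixed pair of slots, namely $\sum_c p_c^2$. The key structural observation is that two slots can collide only if they land in the same class, since distinct classes hold disjoint contents, and that conditionally on landing in class $i$ a slot is uniform over the $\alpha_i B$ contents of that class. Writing $q_i \triangleq \alpha_i \nu_i/(\rho U)$ for the probability that a single slot lands in class $i$ (the sum of $p_c$ over that class, so that $\sum_i q_i = 1$), this decomposition gives $\sum_c p_c^2 = \sum_i q_i^2/(\alpha_i B)$.

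Next I would bound this uniformly in $B$: using $\alpha_i \ge \min_i \alpha_i$ and $\sum_i q_i^2 \le (\sum_i q_i)^2 = 1$, one obtains $\sum_c p_c^2 \le 1/((\min_i \alpha_i) B)$. Combining with the union bound yields
\[
\Pr(\text{not all distinct}) \;\le\; \binom{M}{2}\sum_c p_c^2 \;\le\; \frac{\binom{M}{2}}{(\min_i \alpha_i) B} \;\sim\; \frac{M^2}{2(\min_i \alpha_i) B},
\]
which tends to $0$ precisely under the stated hypothesis $M \ll \sqrt{(\min_i \alpha_i) B}$; the complementary probability then tends to $1$, which is \eqref{eq: almost_all_different}. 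The computation is entirely routine, and I expect no genuine obstacle. The only points that warrant care are the bookkeeping that collisions occur only within a class --- this is what surfaces the factor $\min_i \alpha_i$ rather than a coarser bound involving $\sum_i \alpha_i \nu_i^2$ --- and reading the informal symbols ``$\approx$'' and ``$\ll$'' as the limit $B\to\infty$ taken with $M/\sqrt{(\min_i\alpha_i)B}\to 0$.
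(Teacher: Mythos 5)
Your proof is correct, and it follows the same underlying idea as the paper's: a birthday-paradox estimate in which collisions can only occur within a class, so that the smallest class size $(\min_i\alpha_i)B$ governs the threshold. The paper's own argument is more informal --- it conditions on the class composition of the $M$ slots, notes that within class $i$ the draw is uniform over $\alpha_i B$ contents, and invokes the classical uniform birthday paradox in the worst case where all $M$ slots land in the smallest class. Your version replaces that heuristic reduction with an explicit second-moment computation, $\sum_c p_c^2=\sum_i q_i^2/(\alpha_i B)\le 1/((\min_i\alpha_i)B)$, followed by a union bound over $\binom{M}{2}$ pairs; this yields the same conclusion with an explicit nonasymptotic bound $M^2/(2(\min_i\alpha_i)B)$ on the failure probability, which is a strictly more rigorous rendering of the same argument. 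All intermediate identities (in particular $\sum_c p_c=1$ via \eqref{load-large_catalogue} and the within-class uniformity) check out.
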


\begin{proof}
In the birthday paradox, if there are $m$ people and $n$ equally possible birthdays, the probability that all the $m$ people have different birthdays is close to $1$ whenever $m \ll \sqrt{n}$. Here in our problem, at a certain box, the $M$ cache slots are regarded as ``people'' and the $|\mathcal{C}|$ contents are regarded as ``birthdays.'' Although the probability of picking one content is non-uniform, the probability of picking one content within a specific class is uniform. One can think of picking a content for a cache slot as a two-step process: With probability $\alpha_i \nu_i/\sum_{j}\alpha_{j}\nu_{j}$, a content in class $i$ is chosen. Then conditioned on class $i$, a specific content is chosen uniformly at random among all the $\alpha_i B$ contents in class $i$.

Contents from different classes are obviously different. When $M\ll \sqrt{\alpha_i B}$, even if all the $M$ cached contents are from class $i$, the probability that they are different is close to $1$. Thus, $M\ll \sqrt{\min_i \alpha_i B}$ is sufficient for (\ref{eq: almost_all_different}) to hold.
\end{proof}
\

To prove that under this particular placement, inefficiency in bandwidth utilization vanishes as $M\to\infty$, we shall in fact consider a slight modification of the ``request repacking'' strategy considered so far for determining which contents to accept:

\noindent\hrulefill\\
\textbf{Counter-Based Acceptance Rule}

\noindent\hrulefill\\
A parameter $L>0$ is fixed. Each box $b$ maintains at all times a counter $Z_b$ of associated requests. For any content $c$, the following procedure is used by the server whenever a request arrives: A random set of $L$ distinct boxes, each of which holds a replica of content $c$, is selected. An attempt is made to associate the newly arrived request with all $L$ boxes, but the request will be rejected if its acceptance would lead any of the corresponding box counters to exceed $LU$.

\noindent\hrulefill\\

\begin{remark}
Note that in this acceptance rule, associating a request to a set of $L$ boxes does not mean that the requested content will be downloaded from all these $L$ boxes. In fact, as before, the download stream will only come from one of the $L$ boxes, but here we do not specify which one is to be picked.

It is readily seen that the above rule defines a loss network. Moreover, it is a stricter acceptance rule than the previously considered one. Indeed, it can be verified that when all ongoing requests have an associated set of $L$ boxes, whose counters are no larger than $LU$, there exist nonnegative integers $Z_{cb}$
such that 
$\sum_{b: c\in \mathcal{J}_b} Z_{cb} =  L n_c,~\forall~c\in \mathcal{C}$ and $\sum_{c: c\in \mathcal{J}_b} Z_{cb} \le LU,~\forall~b \in \mathcal{B}$, then feasibility condition \eqref{eq: feasible_hall} 
holds a fortiori.
\hfill $\diamond$
\end{remark}
\

We introduce an additional assumption, needed for technical reasons.
\begin{assumption}
\label{assume: ignore_poor}
A content which is too poorly replicated is never served. Specifically, \textmathbf{a content must be replicated at least $M^{3/4}$ times to be eligible for service.}\hfill$\diamond$
\end{assumption}
\

Our main result in this context is the following theorem:
\begin{theorem}
\label{thm: suff_large_catalog}
Consider  fixed $M$, $\alpha_i$, $\nu_i$, and corresponding load $\rho<1$. Then for suitable choice of parameter $L$, with high probability (with respect to placement) as $B\to\infty$, the loss network with the above ``modified proportional-to-product placement'' and ``counter-based acceptance rule'' admits a content rejection probability $\phi(M)$ for some function $\phi(M)$ decreasing to zero as $M\to\infty$. \hfill $\diamond$
\end{theorem}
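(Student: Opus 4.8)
The plan is to show that, under the counter-based acceptance rule, the $B$ boxes decouple asymptotically into independent single-box Erlang loss systems, each offered a load strictly below its capacity, so that blocking at any individual box is exponentially small in $L$; the per-request rejection probability is then controlled by a union bound over the $L$ boxes consulted for a request, and $L=L(M)$ is tuned to drive this to zero. The first step is to condition on a ``good'' placement, which occurs with high probability. Since each of the $MB$ slots independently holds a class-$i$ content $c$ with probability $\nu_i/(\rho BU)$, the replication count $R_c$ of a class-$i$ content is asymptotically Poisson with mean $r_i\triangleq M\nu_i/(\rho U)$, which is $\Theta(M)$ and hence $\gg M^{3/4}$ for $M$ large. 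By Chernoff/Poisson concentration together with a multinomial-concentration argument across contents, with high probability every class-$i$ content has $R_c=(1\pm o(1))r_i$ except for a fraction that is exponentially small in $M$ whose $R_c<M^{3/4}$; by Assumption~\ref{assume: ignore_poor} these are never served, and as they carry only a vanishing fraction of the offered load their complete rejection contributes only a term vanishing with $M$ to the final bound.

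The key structural observation I would then establish is a cancellation in the per-box load. A request for a class-$i$ content $c$ arrives at rate $\nu_i$ and selects a given box $b$ holding $c$ among its $L$ chosen boxes with probability $\binom{R_c-1}{L-1}/\binom{R_c}{L}=L/R_c\approx L/r_i$, so the rate of association attempts reaching $b$ through content $c$ is $\nu_i\cdot L/r_i=L\rho U/M$, \emph{independently of the class $i$}. Summing over the $M$ contents held by $b$ yields a total attempt rate of $L\rho U$ at every box, against a counter capacity of $LU$ with unit mean holding times. Thus each box looks marginally like an Erlang loss system with offered load $L\rho U=\rho\cdot(LU)$ and $LU$ servers, operating strictly below capacity since $\rho<1$.

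The main obstacle is to make this marginal picture rigorous, i.e.\ to prove the mean-field decoupling of the box-occupancy processes as $B\to\infty$ despite two sources of coupling: the joint $L$-box acceptance decision, and the correlations induced by the shared random bipartite placement graph. I would attack this either via a propagation-of-chaos argument on the empirical occupancy measure of the boxes, or by exploiting the product-form stationary law of this loss network (it is a loss network, as already observed) together with the reduced-load / Erlang fixed-point asymptotics of \cite{kel91}. The crucial simplification is self-consistent: writing $q$ for the stationary fraction of full boxes, a box's counter increments at rate $L\rho U\,(1-q)^{L-1}\II{Z_b<LU}$, and since we will have $q=o(1/L)$ the factor $(1-q)^{L-1}=1-o(1)$, so the dynamics collapse to those of a standalone Erlang system and $q$ converges to the fixed point $q^{*}=E(L\rho U,LU)$. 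The placement concentration from the first step is precisely what lets me treat the thinned arrival streams as asymptotically Poisson with the uniform rate computed above.

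Finally, conditioning on the $L$ boxes consulted for a given request and applying a union bound, the rejection probability of an eligible content is at most $L\,q^{*}+o(1)\le L\,E(L\rho U,LU)+o(1)$; adding the vanishing contribution of the ineligible (under-replicated) contents defines the target function $\phi(M)$. Because $\rho<1$, the Erlang blocking $E(\rho n,n)$ decays exponentially in $n$, so $E(L\rho U,LU)\le e^{-c(\rho)L}$ for some $c(\rho)>0$ and hence $L\,E(L\rho U,LU)\le L\,e^{-c(\rho)L}\to0$. It then suffices to choose $L=L(M)\to\infty$ slowly enough that $L=o(M^{3/4})$ — so the thinning and independence approximations of the earlier steps remain valid — while still $L\to\infty$; with this choice every error term becomes a function of $M$ alone tending to $0$, which yields the desired $\phi(M)\to0$.
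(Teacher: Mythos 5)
Your outline matches the paper's architecture in its broad strokes --- concentration of replica counts, the observation that a request for a class-$i$ content reaches a given one of its $R_c\approx M\nu_i/(\rho U)$ host boxes at rate $\nu_i L/R_c = L\rho U/M$ independently of $i$ (so each box sees offered load $\approx \rho\cdot LU$ against capacity $LU$), an Erlang bound $E(\rho LU,LU)\le e^{-\Theta(L)}$, a union bound over the $L$ consulted boxes, and $L=L(M)\to\infty$ slowly. The key cancellation you identify is exactly the paper's Stage 5(b). However, the step you yourself flag as ``the main obstacle'' --- turning the marginal per-box picture into a rigorous bound despite the joint $L$-box acceptance and the quenched placement randomness --- is left unproved, and the two routes you propose are both heavy and not off-the-shelf here: Kelly's asymptotic exactness of the Erlang fixed point is established for the large-capacity scaling, whereas here the per-box capacity $LU$ stays bounded as $B\to\infty$; and a propagation-of-chaos argument over a quenched random bipartite placement graph is a substantial undertaking. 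The paper closes this gap with a much simpler device you are missing: it defines an auxiliary system in which the $L$ sub-requests generated by an arrival are accepted or rejected \emph{individually} at each of the $L$ boxes. Then each box's counter $Z'_b$ in the auxiliary system is, exactly and without any decoupling argument, a one-dimensional $M/M/LU/LU$ loss process driven by a Poisson thinning, and a pathwise coupling gives $Z_b\le Z'_b$ at all times, so $\Pr(Z_b=LU)\le\Pr(Z'_b=LU)\le e^{-\Theta(L)}$. No mean-field or fixed-point analysis is needed.

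A second, smaller gap: your concentration step controls only the replica counts $R_c$ of contents, not the composition of individual boxes. Your ``total attempt rate $L\rho U$ at every box'' is only valid for boxes all of whose cached replicas are well-replicated (``good'') contents; a box holding many replicas of under-replicated contents (each eligible content with $N_c\ge M^{3/4}$ contributes rate up to $\nu_i LM^{-3/4}\gg L\rho U/M$) would see a strictly larger load. The paper's Stages 3--4 introduce ``good boxes'' (those holding $\alpha_i\nu_i M/(\rho U)\pm O(M^{2/3})$ good replicas of each class, hence at most $O(M^{2/3})$ non-good replicas), show via hypergeometric stochastic ordering and Azuma--Hoeffding that all but a vanishing fraction of boxes are good, and bound the extra load at a good box by $O(LM^{-1/12})$. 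This is also what produces the factor $\bigl(1-O(M^{-1/3+\epsilon})\bigr)^L$ (the chance that all $L$ sampled replicas sit in good boxes) and hence the actual constraint $L\ll M^{1/3-\epsilon}$, rather than the $L=o(M^{3/4})$ you posit. Saying that under-replicated contents ``carry a vanishing fraction of the offered load'' addresses the loss of \emph{their} requests, but not the load they impose on the boxes that happen to store them; that is the part your argument does not cover.
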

\

The interpretation of this theorem is as follows: The fraction of lost service opportunities, for an underloaded system ($\rho<1$), vanishes as $M$ increases. Thus, while Corollary \ref{col_1} showed that $M\to\infty$ is necessary for optimal performance, this theorem shows that it is also sufficient: there is no need for a minimal speed (e.g. $M\ge \log B$) to ensure that the loss rate becomes negligible. 


The proof is given in Appendix \ref{sec: large_catalog_proof}.

\section{Conclusion}
\label{sec: conclusion}
In peer-to-peer video-on-demand systems, the information of content popularity can be utilized to design optimal content placement strategies, which minimizes the fraction of rejected requests in the system, or equivalently, maximizes the utilization of peers' uplink bandwidth resources. We focused on P2P systems where the number of users is large. For the limited content catalogue size scenario, we proved the optimality of a proportional-to-product placement in the Distributed Server Network architecture, and proved optimality of ``Hot-Warm-Cold'' placement in the Pure P2P Network architecture. For the large content catalogue scenario, we also established that proportional-to-product placement leads to optimal performance in the Distributed Server Network. Many interesting questions remain. To name only two, more general popularity distributions (e.g. Zipf) for the large catalogue scenario could be investigated; the efficiency of adaptive cache update rules such as the one discussed in Section~\ref{sec: proportional}, or classical alternatives such as LRU, in conjunction with a loss network operation, also deserves more detailed analysis.

\bibliographystyle{abbrv}
\bibliography{P2P_VoD_Loss}

\begin{thebibliography}{10}

\bibitem{almeagver04DeliveryCost}
J.~M. Almeida, D.~L. Eager, M.~K. Vernon, and S.~J. Wright.
\newblock Minimizing delivery cost in scalable streaming content distribution
  systems.
\newblock {\em IEEE Transactions on Multimedia}, 6(2):356--365, Apr. 2004.

\bibitem{bol98}
B.~Bollob\'as.
\newblock {\em Modern Graph Theory}.
\newblock Springer, New York, 1998.

\bibitem{boufkhad08achievable}
Y.~Boufkhad, F.~Mathieu, F.~de~Montgolfier, D.~Perino, and L.~Viennot.
\newblock Achievable catalog size in peer-to-peer video-on-demand systems.
\newblock In {\em Proc. of the Seventh International Workshop on Peer-to-Peer
  Systems (IPTPS)}, 2008.

\bibitem{brecaofan99}
L.~Breslau, P.~Cao, L.~Fan, G.~Philips, and S.~Shenker.
\newblock Web caching and zipf-like distributions: Evidence and implications.
\newblock In {\em Proc. of IEEE INFOCOM}, Mar. 1999.

\bibitem{DraMas_epidemics10}
M.~Draief and L.~Massoulie.
\newblock Epidemics and rumours in complex networks.
\newblock In {\em the London Mathematical Society Lecture Note Series}.
  Cambridge University Press, 2010.

\bibitem{KanRosTur07optimizingfile}
J.~Kangasharju, K.~W. Ross, and D.~A. Turner.
\newblock Optimizing file availability in peer-to-peer content distribution.
\newblock In {\em Proc. of IEEE INFOCOM}, 2007.

\bibitem{kel91}
F.~Kelly.
\newblock Loss networks.
\newblock {\em The Annals of Applied Probability}, 1(3):319--378, 1991.

\bibitem{klenke10StocOrder}
A.~Klenke and L.~Mattner.
\newblock {Stochastic ordering of classical discrete distributions}.
\newblock {\em Advances in Applied probability}, 42(2):392--410, 2010.

\bibitem{Laoutaris03OptStorageAlloc}
N.~Laoutaris, V.~Zissimopoulos, and I.~Stavrakakis.
\newblock On the optimization of storage capacity allocation for content
  distribution.
\newblock {\em Computer Networks}, 47:409--428, 2003.

\bibitem{MitUpf05_ProbComp}
M.~Mitzenmacher and E.~Upfal.
\newblock {\em Probability and computing: randomized algorithms and
  probabilistic analysis}.
\newblock Cambridge University Press, 2005.

\bibitem{suhdiokur07}
K.~Suh, C.~Diot, J.~Kurose, L.~Massoulie, C.~Neumann, D.~Towsley, and
  M.~Varvello.
\newblock Push-to-peer video-on-demand system: Design and evaluation.
\newblock {\em IEEE Journal on Selected Areas in Communications},
  25(9):1706--1716, 2007.

\bibitem{tanmas10_podcba}
B.~R. Tan and L.~Massoulie.
\newblock Brief announcement: Adaptive content placement for peer-to-peer
  video-on-demand systems.
\newblock In {\em Proc. of 29th Annual ACM SIGACT-SIGOPS Symposium on
  Principles of Distributed Computing (PODC)}, Jul. 2010.

\bibitem{tanmas11_infocom}
B.~R. Tan and L.~Massoulie.
\newblock Optimal content placement for peer-to-peer video-on-demand systems.
\newblock In {\em Proc. of IEEE INFOCOM}, Apr. 2011.

\bibitem{tewkle05PropReplicaPerNode}
S.~Tewari and L.~Kleinrock.
\newblock On fairness, optimal download performance and proportional
  replication in peer-to-peer networks.
\newblock In {\em Proc. of IFIP Networking}, 2005.

\bibitem{tewkle06PropReplicaSystem}
S.~Tewari and L.~Kleinrock.
\newblock Proportional replication in peer-to-peer networks.
\newblock In {\em Proc. of IEEE INFOCOM}, 2006.

\bibitem{TseVis_Wireless05}
D.~Tse and P.~Viswanath.
\newblock {\em Fundamentals of Wireless Communication}.
\newblock Cambridge University Press, 2005.

\bibitem{vallaomas09}
V.~Valancius, N.~Laoutaris, L.~Massoulie, C.~Diot, and P.~Rodriguez.
\newblock Greening the internet with nano data centers.
\newblock In {\em Proc. of the 5th international conference on Emerging
  networking experiments and technologies (CoNEXT)}, pages 37--48, 2009.

\bibitem{WuLi09_keepcache}
J.~Wu and B.~Li.
\newblock Keep cache replacement simple in peer-assisted {V}o{D} systems.
\newblock In {\em Proc. of IEEE INFOCOM Mini-Conference}, 2009.

\bibitem{ZhouXu07VideoReplica}
X.~Zhou and C.-Z. Xu.
\newblock Efficient algorithms of video replication and placement on a cluster
  of streaming servers.
\newblock {\em Journal of Network and Computer Applications}, 30(2):515--540,
  Apr. 2007.

\end{thebibliography}

\appendix

\subsection{Proof of Theorem \ref{thm: suff_large_catalog}}
\label{sec: large_catalog_proof}
The proof has five sequential stages:
\begin{flushleft}
{\em 1) The chance for a content to be ``good''}
\end{flushleft}

Let $N_c$ denote the number of replicas of content $c$ of class $i$. Then,  $N_c$ admits a binomial distribution with parameters $(M\!B, \frac{\nu_i}{\rho BU})$. We call content $c$ a ``good'' content if $|N_c - \EE [N_c]| < M^{2/3}$, i.e.,
\begin{equation}
\left|N_c- \frac{\nu_i M}{\rho U} \right| < M^{2/3}.\label{eq: good_content_def}
\end{equation}
As $N_c = \sum_{i=1}^{M\!B} Z_i$, where $Z_i \sim Ber(p)$ ($p\triangleq \frac{\nu_i}{\rho BU}$) are i.i.d., according to the Chernoff bound,
\begin{equation}
\Pr\left(N_c \ge M^{2/3} + \frac{\nu_i M}{\rho U}\right) \le e^{-M\!B\cdot I(a)},	\label{eq: chernoff1}
\end{equation}
where $a\triangleq \left(M^{2/3} + \frac{\nu_i M}{\rho U}\right) / M\!B$ and $I(x) \triangleq \sup_{\theta} \{x \theta - \ln (\EE[e^{\theta Z_i}])\}$ is the Cram\'er transform of the Bernoulli random variable $Z_i$. Instead of directly deriving the RHS of inequality \eqref{eq: chernoff1}, which can be done but needs a lot of calculations (see Appendix \ref{sec: ber_chernoff}), we upper bound it by using a much simpler approach here: For the same deviation, a classical upper bound on the Chernoff bound of a binomial random variable is provided by the Chernoff bound of a Poisson random variable which has the same mean (see e.g. \cite{DraMas_epidemics10}). Therefore, the RHS of inequality \eqref{eq: chernoff1} can be upper bounded by 
$$
\exp\left(-\frac{\nu_i M}{\rho U} \cdot \hat{I}\left(1+\frac{\rho U}{\nu_i M^{1/3}}\right)\right),
$$
where $\hat{I}(x)$ is the Cram\'er transform of a unit mean Poisson random variable, i.e., $\hat{I}(x)=x\log x-x+1$. By Taylor's expansion of $\hat{I}(x)$ at $x=1$, the exponent in the last expression is equivalent to
\begin{eqnarray*}
&&-\frac{\nu_i M}{\rho U}\cdot \left(\frac{1}{2}\left(\frac{\rho U}{\nu_i M^{1/3}}\right)^2+o\left(M^{-2/3}\right)\right)\\
&=&-\frac{\rho U}{2\nu_i}M^{1/3} + o\left(M^{1/3}\right) = -\Theta\left(M^{1/3}\right). \label{eq: exp_order}
\end{eqnarray*}


On the other hand, when $M$ is large, $-M^{2/3} + \frac{\nu_i M}{\rho U} \ge 0$ holds, hence we have
\begin{eqnarray}
&&\Pr\left(N_c \le -M^{2/3} + \frac{\nu_i M}{\rho U}\right)\nonumber\\
&=& \Pr\left(\sum_{i=1}^{M\!B} \hat{Z}_i \ge M\!B\cdot \hat{a}\right)
\le e^{-M\!B \cdot \hat{I}(\hat{a})},\label{eq: chernoff2}
\end{eqnarray}
where $(-\hat{Z}_i) \sim Ber(p)$, $\hat{a}\triangleq M^{-1/3}/B - p \in [-1,0]$ when $B$ is large, and it is easy to check that $\hat{I}(\hat{a}) = I(-\hat{a})$. Similarly as above by upper bounding $e^{-MB\cdot I(-\hat{a})}$, we can find that the exponent of the upper bound is also $-\Theta\left(M^{1/3}\right)$. Therefore,
\begin{equation}
\Pr(\mbox{content $c$ is good}) \ge 1- 2 e^{-\Theta(M^{1/3})}.\label{eq: good_bound}
\end{equation}

\begin{flushleft}
{\em 2) The number of ``good contents'' in each class}
\end{flushleft}

Denoting by $X_i$ the number of good contents in class $i$, we want to use a corollary of Azuma-Hoeffding inequality (see e.g. Section 12.5.1 in \cite{MitUpf05_ProbComp} or Corollary 6.4 in \cite{DraMas_epidemics10}) to upper bound the chance of its deviation from its mean. This corollary applies to a function $f$ of independent variables $\xi_1,\ldots, \xi_{n}$, and states that if the function changes by an amount no more than some constant $c$ when only one component $\xi_i$ has its value changed, then for all $t>0$,
$$
\Pr(|f(\bm{\xi})-\EE[f(\bm{\xi})]|\ge t)\le 2 e^{-2t^2/(n c^2)}.
$$

Back to our problem, each independent variable $\xi_j$ correspond to the choice of a content to be placed in a particular memory slot at a particular box (we index a slot by $j$ for $1 \le j \le M\!B$), and $f(\bm{\xi})$ corresponds to the number of good contents in class $i$ based on the placement $\bm{\xi}$, i.e., $X_i= f(\bm{\xi})$. It is easy to see that in our case $c=1$, hence we have
\begin{equation*}
\Pr(|X_i-\EE [X_i]|\ge t)\le 2 e^{-2t^2/(M\!B)},~\forall t>0. 
\end{equation*}
Taking $t=(M\!B)^{2/3}$ in the above inequality further yields
$$\Pr\left(|X_i-\EE [X_i]| \ge (M\!B)^{2/3}\right)\le 2e^{-2(M\!B)^{1/3}}.$$ 
Thus,
we have
\begin{eqnarray}
&& \Pr\left(X_i\ge  \left(1-2 e^{-\Theta(M^{1/3})}\right) \cdot \alpha_i B -(MB)^{2/3}\right)\nonumber\\
&\stackrel{(a)}{\ge}&\Pr\left(X_i\ge  \EE[X_i]-(MB)^{2/3}\right)\nonumber\\
&\ge &  \Pr\left(|X_i - \EE[X_i]| <(MB)^{2/3}\right)\nonumber\\
&\ge & 1 - 2 e^{-2(M\!B)^{1/3}},
\label{eq: good_num_bound}
\end{eqnarray}
where (a) holds since
\begin{eqnarray*}
\EE[X_i] &=&\Pr(\mbox{content $c$ is good}) \cdot \alpha_i B \\
&\ge& \left(1- 2 e^{-\Theta(M^{1/3})}\right)\cdot \alpha_i B.
\end{eqnarray*}
Note that in order for the lower bound on $X_i$ shown in the above probability to be $\Theta(B)$, $M \sim o(B^{1/2})$ is a sufficient condition.

\begin{flushleft}
{\em 3) The chance for a box to be ``good''}
\end{flushleft}

We call a replica ``good'' if it is a replica of a good content, and use $C_i$ to denote the number of good replicas of class $i$. We also call a box ``good'' if the number of good replicas of class $i$ held by this box lies within
$$\frac{\alpha_i \nu_i M}{\rho U} \pm O(M^{2/3}).$$
As we did for ``good contents,'' we will also use the Chernoff bound to prove that a box is good with high probability.


Let ${\mathcal E}_i$ represent an event that the number $X_i$ of good contents within class $i$ satisfies
\begin{equation}
X_i\ge \left(1-2e^{-\Theta(M^{1/3})}\right)\alpha_i B - (M\!B)^{2/3}, \label{eq: good_num_LB}
\end{equation}
which has a probability of at least $1-2 e^{-\Omega((M\!B)^{1/3})}$,  according to inequality \eqref{eq: good_num_bound} when $M \sim o(B^{1/2})$. Conditional on $\mathcal{E}_i$, according to the lower bound in inequality \eqref{eq: good_content_def} (i.e., the definition of ``good contents'')  and inequality \eqref{eq: good_num_LB}, we have
\begin{eqnarray}
C_i &\ge& \left(\frac{\nu_i M}{\rho U}-M^{2/3}\right)
\bigg(\left(1-2 e^{-\Theta(M^{1/3})}\right)\alpha_i B \nonumber\\
&&-(M\!B)^{2/3}\bigg) \nonumber
\\
&=&M\!B\cdot \frac{\alpha_i \nu_i}{\rho U}\left(1-O(M^{-1/3}+M^{2/3} B^{-1/3})\right). \nonumber\\
\label{eq: Ci_LB}
\end{eqnarray}
On the other hand, from the upper bound in inequality \eqref{eq: good_content_def} and the fact $X_i \le \alpha_i B$, we obtain that
\begin{equation}
C_i\le M\!B\cdot\frac{\alpha_i \nu_i}{\rho U}\left(1+O(M^{-1/3})\right).\label{eq: Ci_UB}
\end{equation}
Conditional on $\mathcal{E}_i$, to constitute a box, sample without replacement from the determined content replicas. Denote the number of good replicas of class $i$ stored in a particular box (say, box $b$) by $\zeta_i$, which actually represents the number of good replicas in the $M$ samples sampled without replacement from all the $M\!B$ replicas, among which $C_i$ are good ones (conditional on $\mathcal{E}_i$). This means that, conditional on $\mathcal{E}_i$, $\zeta_i$ follows a hypergeometric distribution $H(M\!B, C_i, M)$. It can be found that (see e.g. Theorem 1 in \cite{klenke10StocOrder}) conditional on $\mathcal{E}_i$,
$H_i \le_{st} \zeta_i \le_{st} G_i.$ Here, ``$\le_{st}$'' represents stochastic ordering, and 
\begin{eqnarray*}
G_i &\sim& \mbox{Bin}\left(M,\frac{\alpha_i\nu_i}{\rho U}\left(1+O(M^{-1/3})\right)\right),\\
H_i &\sim& \mbox{Bin}\left(M,\frac{\alpha_i\nu_i}{\rho U}\left(1-O(M^{-1/3}+ M^{2/3} B^{-1/3})\right)\right),
\end{eqnarray*}
where the second parameters of the distributions of $G_i$ and $H_i$ are determined according to inequalities \eqref{eq: Ci_UB} and \eqref{eq: Ci_LB} respectively.

We will see why we need these two ``binomial bounds'' on $\zeta_i$. By definition, 
\begin{eqnarray}
&&\Pr(\mbox{box $b$ is not good})\nonumber\\ 
&=& \Pr\left(\bigcup_{i\in \mathcal{I}} \left\{\left|\zeta_i - \frac{\alpha_i \nu_i M}{\rho U}  \right|\ge O(M^{2/3})\right\}\right) \nonumber\\
&\le & \sum_{i \in \mathcal{I}}\Pr\left(\left|\zeta_i - \frac{\alpha_i \nu_i M}{\rho U} \right|\ge O(M^{2/3})\right), \label{eq: union_bound}
\end{eqnarray}
where for all $i \in \mathcal{I}$,
\begin{eqnarray}
&&\Pr\left(\left|\zeta_i - \frac{\alpha_i \nu_i M}{\rho U}  \right|\ge O(M^{2/3})\right)\nonumber\\
&=& \Pr\left(\left|\zeta_i - \frac{\alpha_i \nu_i M}{\rho U}  \right|\ge O(M^{2/3}),~\mathcal{E}_i\right) \nonumber\\
&&+ \Pr\left(\left|\zeta_i -\frac{\alpha_i \nu_i M}{\rho U}  \right|\ge O(M^{2/3}),~\mathcal{E}_i^c\right) \nonumber\\
&\le& \Pr\left(\left|\zeta_i - \left.\frac{\alpha_i \nu_i M}{\rho U}  \right|\ge O(M^{2/3})\right|\mathcal{E}_i\right) \cdot \Pr\left(\mathcal{E}_i\right)\nonumber\\
&&+ \Pr\left(\mathcal{E}_i^c\right). \label{eq: dev_bound1}
\end{eqnarray}
By definition of stochastic ordering, 
\begin{eqnarray*}
&&\Pr\left(\left|\zeta_i - \left.\frac{\alpha_i \nu_i M}{\rho U}  \right|\ge O(M^{2/3})\right|\mathcal{E}_i\right)\\
&\le & \Pr\left(G_i \ge \frac{\alpha_i \nu_i M}{\rho U} + O(M^{2/3})\right)\\
&& + \Pr\left(H_i \le \frac{\alpha_i \nu_i M}{\rho U} - O(M^{2/3})\right)\\
&\stackrel{(a)}{\le}& 2 e^{-\Theta(M^{1/3})},
\end{eqnarray*}
where (a) can be obtained using a similar Chernoff bounding approach as for $N_c$ in Stage 1 of this proof.
Thus, continuing from inequality \eqref{eq: dev_bound1}, we further have
\begin{eqnarray}
&&\Pr\left(\left|\zeta_i - \frac{\alpha_i \nu_i M}{\rho U}  \right|\ge O(M^{2/3})\right) \nonumber\\
&\le& 2 e^{-\Theta(M^{1/3})} \cdot \Pr(\mathcal{E}_i) + (1-\Pr\left(\mathcal{E}_i\right))\nonumber\\
&=&1-(1-2e^{-\Theta(M^{1/3})})\Pr\left(\mathcal{E}_i\right)\nonumber\\
&\le&1-(1-2 e^{-\Theta(M^{1/3})})(1-2 e^{-\Omega((M\!B)^{1/3})})\nonumber\\
&=&2 e^{-\Theta(M^{1/3})}- 2 e^{-\Omega((M\!B)^{1/3})}.\label{eq: dev_bound2}
\end{eqnarray}
Putting inequality \eqref{eq: dev_bound2} back to inequality \eqref{eq: union_bound} immediately results in
\begin{equation}
\Pr(\mbox{box $b$ is good}) \ge 1- 2|\mathcal{I}| e^{-\Theta(M^{1/3})}.\label{eq: good_box_bound}
\end{equation}

\begin{flushleft}
{\em 4) The number of ``good boxes''}
\end{flushleft}

We use a similar approach as in Stage 2 to bound the number of good boxes, say $Y$, which can be represented as a function $g(\bm{\xi})$ where $\bm{\xi} = (\xi_1,\xi_2,\cdots,\xi_{M\!B})$ is the same content placement vector defined in Stage 2. Still, $g(\bm{\xi})$ changes by an amount no more than $1$ when only one component $\xi_i$ has its value changed, then for all $t>0$, $\Pr(|Y-\EE[Y]|\ge t)\le 2 e^{-2t^2/(M\!B)}$, and taking $t=(M\!B)^{2/3}$ further yields
$$\Pr\left(|Y-\EE [Y]| \ge (M\!B)^{2/3}\right)\le 2e^{-2(M\!B)^{1/3}}.$$       
Similarly as we obtain inequality \eqref{eq: good_num_bound}, we finally come to
\begin{equation}
\Pr\left(Y \ge B\left(1-2|\mathcal{I}|e^{-\Theta(M^{1/3})}\right)\right) \ge 1-2e^{-2(M\!B)^{1/3}}. \label{eq: good_box_num_bound}
\end{equation}

\begin{flushleft}
{\em 5) The performance of a loss network}
\end{flushleft}

Finally, consider the performance of the loss network defined by the 
``Counter-Based Acceptance Rule.'' We introduce an auxiliary system to establish an upper bound on the rejection rate. In the auxiliary system, upon arrival of a request for content $c$, $L$ different requests are mapped to $L$ distinct boxes holding a replica of $c$, but here they are accepted or rejected individually rather than jointly. Letting $Z_b$ (respectively, $Z'_b$) denote the number of requests associated to box $b$ in the original (respectively, auxiliary) system, one readily sees that $Z_b\le Z'_b$ at all times and all boxes and for each box $b$, the process $Z'_b$ evolves as a one-dimensional loss network. We now want to upper bound the overall arrival rate of requests to a good box:\\

{\em (a) Non-good contents}

Assume that upon a request arrival, we indeed pick $L$ content replicas, rather than $L$ distinct boxes holding the requested content (as specified in the acceptance rule). This entails that, if two replicas of this content are present at one box, then this box can be picked twice. However, since a vanishing fraction of boxes will have more than one replicas of the same content when $M\ll  \sqrt{(\min_i \alpha_i)B}$ (as proved in Lemma \ref{lemma: suff_for_diff_replica}), we can strengthen the definition of a ``good'' box to ensure that, on top of the previous properties, a good box should hold $M$ distinct replicas. It is easy to see that the fraction of good boxes will still be of the same order as with the original weaker definition. 

With these modified definitions, consider one non-good content $c$ of class $i$ cached at a good box. Its unique replica will be picked with probability $L/N_c$ when the sampling of $L$ replicas among the $N_c$ existing ones is performed. Thus, since we ignore requests for all content $c$ with $N_c\le M^{3/4}$ (according to Assumption \ref{assume: ignore_poor}), the request rate will be at most $\nu_{i} L M^{-3/4}$.

Besides, there are at most $O(M^{2/3})$ non-good content replicas held by one good box. The reason is as follows: By definition, a good box holds at least 
\begin{equation}
\sum_{i\in \mathcal{I}} \left(\frac{\alpha_i \nu_i M}{\rho U}-O(M^{2/3})\right) = M -O(M^{2/3})  \label{eq: good_replica_in_good_box}
\end{equation}
good content replicas among all classes, so the remaining slots, being occupied by non-good content replicas, are at most $O(M^{2/3})$. Therefore, the overall arrival rate of requests for non-good contents to a good box is upper bounded by
\begin{equation}
\overline{\nu}_{\mbox{non-good}} = O(M^{2/3}\cdot L M^{-3/4}) = O(LM^{-1/12}). \label{eq: non-good_rate_UB}
\end{equation}

{\em (b) Good contents}

The rate generated by a good content $c$ of class $i$ is $\nu_i L/N_c$. Now, by definition of a good content, one has:
$$
N_c\ge \frac{\nu_i M}{\rho U}(1-O(M^{-1/3})).
$$
This entails that the rate of requests for this content is upper bounded by
$$
\frac{\rho LU}{M}(1+O(M^{-1/3})).
$$
By definition of a ``good box,'' there are at most $\alpha_i\nu_i M/\rho U +O(M^{2/3})$ good content replicas of class $i$ cached in this good box. Therefore, the overall arrival rate of requests for good contents to a good box is upper bounded by
\begin{eqnarray}
\overline{\nu}_{\mbox{good}} &=& \sum_{i\in \mathcal{I}} \left( \frac{\rho L U}{M}(1+O(M^{-1/3}))\right) \nonumber\\
&&\times \left(\frac{\alpha_i\nu_i M}{\rho U} +O(M^{2/3})\right) \nonumber\\
&=& (\rho LU)(1+O(M^{-1/3})).\label{eq: good_rate_UB}
\end{eqnarray}

To conclude, for any good box $b$, the process $Z'_b$ evolves as a one-dimensional loss network with arrival rate no larger than
$$
\overline{\nu} = \overline{\nu}_{\mbox{non-good}} + \overline{\nu}_{\mbox{good}} = \rho LU+ O(LM^{-1/12}),
$$
by combining the two results in \eqref{eq: non-good_rate_UB} and \eqref{eq: good_rate_UB}.\\

Next, we are going to upper bound the loss probability of $Z'_b$. Since $\overline{\nu}$ is an upper bound on the arrival rate, the probability that $Z'_b=LU$ is upper bounded by $E(\rho LU + O(LM^{-1/12}),LU)$. One can actually further upper bound this Erlang function by $e^{-\Theta(L)}$. To see this, 
let us first rewrite the loss probability (Erlang function) of a general 1-D loss network, say $E(\lambda, C)$, as a certain conditional probability of $S\sim \mbox{Poi}(\lambda)$, i.e.,
$$
E(\lambda,C) = \Pr(S=C|S\le C) = \frac{\Pr(S=C)}{\Pr(S\le C)}.  
$$
Using the Chernoff bound, we have $\Pr(S\ge C) \le e^{-\lambda I(C/\lambda)}$,  where $I(x) = x \log x - x + 1$, hence
$$
E(\lambda,C) \le \frac{\Pr(S\ge C)}{1-\Pr(S \ge C)} \le \frac{e^{-\lambda I(C/\lambda)}}{1-e^{-\lambda I(C/\lambda)}}.
$$
Back to the Erlang function in our problem, $I(C/\lambda) = I((\rho+O(M^{-1/12}))^{-1})$, hence, 
\begin{equation}
\Pr(Z'_b = LU) \le E(\rho LU + O(LM^{-1/12}),LU) \le e^{-\Theta(L)},\label{eq: loss_net_small_loss}
\end{equation}
where the second inequality holds under the assumption that $\rho < 1$ (otherwise, the exponent will become $0$ or $+\Theta(L)$). \\

The number of good replicas in good boxes is, due to inequality \eqref{eq: good_box_num_bound} and equation \eqref{eq: good_replica_in_good_box}, at least $M\!B(1-O(M^{-1/3}))$, with a high probability (at least $1-2e^{-2(M\!B)^{1/3}}$). On the other hand, the total number of replicas of good contents is at most $M\!B$, which is the total number of replicas (or available cache slots).

Now pick some small $\epsilon \in (0,1/3)$ and let $\tilde{X}$ denote the number of good contents which have at least $M^{2/3+\epsilon}$ replicas outside good boxes. Then necessarily, with a probability of at least $1-2e^{-2(M\!B)^{1/3}}$,
\begin{eqnarray*}
&&\tilde{X}\! M^{2/3+\epsilon}\le M\!B-M\!B(1-O(M^{-1/3}))
=O(B\!M^{2/3}),
\end{eqnarray*}
i.e., $\tilde{X}\le O(B\!M^{-\epsilon})$. According to inequality \eqref{eq: good_num_bound}, the total number of good contents is $\Theta(B)$ (specifically, very close to $|\mathcal{C}| = \alpha B$) with a probability of at least $1-2|\mathcal{I}|e^{-2(M\!B)^{1/3}}$, hence we can conclude that, with high probability, for a fraction of at least $1-O(M^{-\epsilon})$ of good contents, each of them has 
at least a fraction $1-O(M^{-1/3+\epsilon})$ of its replicas stored in good boxes (since a good content has $\frac{\nu_i}{\rho U}M \pm O(M^{2/3})$ replicas in total by definition).
We further use $\tilde{\mathcal{C}}$ to represent the set of such contents.

Recall that $A_c$ was defined in Subsection \ref{sec: loss_network} as the steady-state probability of accepting a request for content $c$ in the original system. For all $c \in \tilde{\mathcal{C}}$, 
\begin{eqnarray}
A_c &\ge & \Pr(\mbox{all the $L$ sampled replicas are in good boxes}) \nonumber\\
&& \times \Pr(Z_b < LU,~\forall b~s.t.~\mbox{box $b$ is sampled})\nonumber\\
&\stackrel{(a)}{\ge}&\left(1-O(M^{-1/3+\epsilon})\right)^L\nonumber\\
&& \times \Pr(Z'_b < LU,~\forall b~s.t.~\mbox{box $b$ is sampled}).\nonumber\\
&\stackrel{(b)}{\ge} & \left(1-O(M^{-1/3+\epsilon})\right)^L \cdot \left(1-L e^{- \Theta(L)}\right). \nonumber
\\\label{eq: accept_LB}
\end{eqnarray}
Here, (b) is obtained according to inequality \eqref{eq: loss_net_small_loss}. The argument why (a) holds is as follows: 
We have $N_c\approx \nu_i M/(\rho U)$ replicas (assuming that content $c$ is of class $i$), among which $N'_c=N_c(1-O(M^{-1/3+\epsilon}))$ are in good boxes. Then, the probability that $L$ samples fall in the good boxes can be written explicitly as 
$$\frac{N'_c(N'_c-1)\cdots (N'_c-L+1)}{N_c (N_c-1)\cdots (N_c-L+1)},$$ 
which can be approximated as the first part on the RHS we write above, under the assumption that $L \ll M$. The second part is due to the fact that $Z'_b \le Z_b$ for all box $b$.


It should be recalled that within this stage of proof, finally coming to inequality \eqref{eq: accept_LB} actually needs everything to be conditional on the following events:
\begin{itemize}
\item The number of good boxes is $\Theta(B)$;
\item The number of good contents is $\Theta(B)$;
\item A box caches $M$ distinct replicas, 
\end{itemize}  
and as $B, M\rightarrow \infty$ and $M\ll  \sqrt{(\min_i \alpha_i)B}$, all of them have high probabilities. Additionally, $\tilde{\mathcal{C}}\stackrel{p}{\rightarrow}\mathcal{C}$ as $B,M\rightarrow \infty$. Therefore, further letting $L\rightarrow \infty$ but keeping $L \ll M^{1/3-\epsilon}$, we will find that the RHS of inequality \eqref{eq: accept_LB} is approximated as
$$
1-O(LM^{-1/3+\epsilon})-L e^{- \Theta(L)} \approx 1,
$$
and then conclude that the requests for almost all the contents will have near-zero loss.

\subsection{Proof of Equivalence between Feasibility Conditions~(\ref{eq: feasible_ori}) and~(\ref{eq: feasible_hall})}
\label{sec: proof_hall}
\subsubsection{Sufficiency of Condition~(\ref{eq: feasible_hall})}
We use Hall's theorem to prove the sufficiency.\\\\
\textbf{[Hall's theorem] }Suppose $\mathcal{J} = \{J_1,J_2,\cdots\}$ is a collection of sets (not necessarily countable). A SDR (``System of Distinct Representatives'') for $\mathcal{J}$ is defined as $X = \{x_1, x_2, \cdots\}$, where $x_i\in J_i$. Then, there exists a SDR (not necessarily unique) iff. $\mathcal{J}$ meets the following condition:
\begin{equation}
\forall~\mathcal{T}\subseteq \mathcal{J},~|\mathcal{T}|\le |\bigcup_{A\in\mathcal{T}}A|. \label{eq: marriage_cond}
\end{equation}
\hfill $\diamond$

In our P2P VoD system, denote the content set as $\mathcal{C} = \{c_1, c_2, \cdots,c_N\}$. Given the ongoing download services of each content $\{n_i\}_{i=1}^N$, we get a ``distinguishable content set''
\begin{eqnarray*}
\bar{\mathcal{C}} &=& \{c_1^{(1)}, c_1^{(2)}, \cdots, c_1^{(n_1)}; c_2^{(1)}, c_2^{(2)}, \cdots, c_1^{(n_2)}; \cdots;\\
&& c_N^{(1)}, c_N^{(2)}, \cdots, c_N^{(n_N)}\},
\end{eqnarray*}
where $c_i^{(k)}$ represents the $k$-th download service of content $i$ for $1\le k \le n_i$, and has its ``potential connection set''
$$J_i^{(k)} = \{l_b^{(j)}:~1\le j\le U,~c_i\in b,~b\in\mathcal{B}\},$$
i.e., the set of all the connections of those boxes which have content $c_i$. A collection of the ``potential connection sets'' for all $\{c_i^{(k)}\}$ is then
$$\mathcal{J} = \{J_1^{(1)}, J_1^{(2)}, \cdots, J_1^{(n_1)}; \cdots; J_N^{(1)}, J_N^{(2)}, \cdots, J_N^{(n_N)}\},$$
and a SDR for $\mathcal{S}$ is
$$X = \{x_1^{(1)}, x_1^{(2)}, \cdots, x_1^{(n_1)}; \cdots; x_N^{(1)}, x_N^{(2)}, \cdots, x_N^{(n_N)}\},$$
s.t. $x_i^{(k)}\in J_i^{(k)},$ which means each $c_i^{(k)}$ is affiliated with a distinct connection (i.e., a feasible solution in our model).

Now we want to prove the existence of such a SDR, i.e., to prove equation~(\ref{eq: marriage_cond}). For $\forall~\mathcal{T}\subseteq \mathcal{J}$, there is a one-to-one mapping between $\mathcal{T}$ and a $\bar{\mathcal{S}}\subseteq \bar{\mathcal{C}}$. Further, this $\bar{\mathcal{S}}$ can be mapped to a $\mathcal{S}\subseteq \mathcal{C}$ where
$$\mathcal{S} = \{c_i:~\exists 1\le k\le n_i,~s.t.~c_i^{(k)} \in \bar{\mathcal{S}}\},$$
i.e., $\mathcal{S}$ is the set of all contents considered in $\bar{\mathcal{S}}$ without considering multiple services of each content. Then, $\forall~\mathcal{T}\subseteq \mathcal{J}$,
\begin{eqnarray*}
\mbox{RHS} &=& |\bigcup_{J^{(k)}_i\in \mathcal{T}}J_i^{(k)}| = \sum_{b:\exists c_i\in \mathcal{S}~s.t.~c_i\in b} U\\
&=& U \left|\{b\in{\mathcal B}:\;  \mathcal{S}\cap  \mathcal{J}_b\ne \emptyset\}\right|
\end{eqnarray*}
and
$$\mbox{LHS} = |\mathcal{T}| = |\bar{\mathcal{S}}| \le \sum_{c_i\in\mathcal{S}}n_i.$$

Therefore, if
$$\forall S\subseteq \mathcal{C}, \sum_{c_i\in\mathcal{S}}n_i \le = U \left|\{b\in{\mathcal B}:\;  \mathcal{S}\cap  \mathcal{J}_b\ne \emptyset\}\right|$$
holds, then equation~(\ref{eq: marriage_cond}) holds. The sufficiency is proved.\\

\subsubsection{Necessity of Condition~(\ref{eq: feasible_hall})}
For any $\mathcal{S}\subseteq\mathcal{C}$,
\begin{eqnarray*}
\sum_{c\in\mathcal{S}}n_c &=& \sum_{c\in\mathcal{S}}\sum_{b:c\in \mathcal{J}_b}Z_{cb}=\sum_{{b:~\exists c \in \mathcal{S}}\atop {s.t.~c\in \mathcal{J}_b}} \sum_{c\in\mathcal{S}\cap\mathcal{J}_b} Z_{cb}\\
&\stackrel{(a)}{\le}& \sum_{b:~\exists c \in \mathcal{S}~s.t.~c\in \mathcal{J}_b} U = U \left|\{b\in{\mathcal B}:\;  \mathcal{S}\cap  \mathcal{J}_b\ne \emptyset\}\right|,
\end{eqnarray*}
where the inequality (a) is due to the second constraint in condition~(\ref{eq: feasible_ori}). Hence, the necessity is proved.

\subsection{Approximation to Proportional-to-Product Placement Using Bernoulli Sampling}
\label{sec: analyse_push}
An alternative sampling strategy to get the proportional-to-product placement is as follows:


\noindent\hrulefill\\
To push contents to box $b$ $(1\le b \le B)$, the server will
\begin{enumerate*}
\item Generate $C$ independent Bernoulli random variables $X_c \sim \mbox{Ber}(p_c)$ for all $c\in\mathcal{C}$, where $p_c = \beta \hat{\nu}_c / (1 + \beta \hat{\nu}_c)$, $\hat{\nu}_c$ is the normalized version of $\nu_c$, and $\beta$ is a customized constant parameter. \label{step: gen_ber_rv}
\item If $\sum_{c\in \mathcal{C}} X_c = M$ (which means a valid cluster of size $M$ is generated), push content $c$ to box $b$ if $X_c = 1$; Otherwise, go back to Step~\ref{step: gen_ber_rv}.
\end{enumerate*}
\noindent\hrulefill \vspace{0.1in}

We now analyze why this scheme works: after generating a valid size-$M$ subset, the probability that this subset is a certain subset $\mathcal{G}_j$ equals
\begin{eqnarray*}
&&\Pr(X_c = 1,~\forall c\in \mathcal{G}_j;~X_c = 0,~\forall c\not\in \mathcal{G}_j | \sum_{c\in \mathcal{C}} X_c = M)\\
&&= \frac{\prod_{c\in \mathcal{G}_j}p_c \cdot \prod_{c\not\in \mathcal{G}_j}(1-p_c)}{\Pr(\sum_{c\in \mathcal{C}} X_c = M)}\\
&&= \prod_{c\in \mathcal{G}_j} \frac{p_c}{1-p_c} \cdot \left(\frac{\prod_{c\in\mathcal{C}}p_c}{\Pr(\sum_{c\in \mathcal{C}} X_c = M)}\right)\\
&&= \prod_{c\in \mathcal{G}_j} \hat{\nu}_c / Z,
\end{eqnarray*}
where $Z = \Pr(\sum_{c\in \mathcal{C}} X_c = M) / (\beta^M \prod_{c\in\mathcal{C}}p_c)$, which actually equals the normalizing factor for $\prod_{c\in \mathcal{G}_j} \hat{\nu}_c$.

We then consider the computational complexity of this approximation algorithm. Assuming that $\{\hat{\nu}_c\}$ is sorted in the descending order, we have \begin{eqnarray*}
\Pr(\sum_{c\in \mathcal{C}} X_c = M) &\ge& \prod_{c=1}^M p_c \cdot \prod_{c=M+1}^{C} (1-p_c) \\
&=& \frac{\prod_{c=1}^M \beta \hat{\nu}_c}{\prod_{c=1}^{C} (1+\beta \hat{\nu}_c)} \triangleq P^*.
\end{eqnarray*}
So the computational complexity is upper bounded by $O(BC/P^*)$. Note that the constant parameter $\beta$ can be adjusted to get a higher $\Pr(\sum_{c\in \mathcal{C}} X_c = M)$ in order to reduce computational complexity. To achieve this, we can just choose a $\beta$ which maximizes its lower bound $P^*$, so
\begin{equation}
\frac{\partial \log P^*}{\partial \beta} = \frac{M}{\beta} - \sum_{c=1}^{C} \frac{\hat{\nu}_c}{1+\beta \hat{\nu}_c} = 0. \label{eq: optimal_beta}
\end{equation}
The server can use any numerical methods (e.g., Newton's method) to seek a root of equation~(\ref{eq: optimal_beta}). In fact, this lower bound $P^*$ on $\Pr(\sum_{c\in \mathcal{C}} X_c = M)$ is not tight, since it is just the largest item in the sum expression. When the popularity is close to uniformness (e.g., in a zipf-like distribution, $\alpha$ is small), this largest item is no longer dominant, so the lower bound $P^*$ is quite untight, which means we actually overestimate the computation complexity by only evaluating its upper bound. However, this will not affect the real gain we obtain after choosing the optimal $\beta$ according to equation~(\ref{eq: optimal_beta}).

Recall that we also proposed a simple sampling strategy in Section~\ref{sec: proportional}. It is easy to see that when some contents are much more popular than the others (e.g., zipf-like $\alpha$ is large), the probability that duplicates appear in one size-$M$ sample is high, hence largely increases the number of resampling. Thus, it would be faster if we choose the Bernoulli sampling. However, when the popularity is quite uniform, the simple sampling works very well. An extreme case is that under the uniform popularity distribution,
$$\Pr\{\mbox{a valid size-$M$ subset}\} = \frac{{C\choose M} \cdot M!}{C^M} = \prod_{i=1}^{M-1} \left(1-\frac{i}{C}\right),$$
which shows that when $C$ is large, you can get a valid sample almost every time.

\subsection{Detailed Implementation in the Simulations}
\label{sec: implement}

\subsubsection{A Heuristic Repacking Algorithm}
\label{sec: repack}


%
%
%

We first describe the concept of ``repacking.'' When the cache size $M=1$, all the bandwidth resources at a certain box belongs to the content the box caches. When $M\ge 2$, however, this is not the case: all the contents cached in one box are actually competitors for the bandwidth resources at that box. Let's consider a simple example in which $B = 2$, $M=2$ and $U=1$: Box $1$ which caches content $1$ and $2$ is serving a download of content $2$, while box $2$ which caches content $2$ and $3$ is idle. When a request for content $1$ comes, the only potential candidate to serve it is box $1$, but since the only connection is already occupied by a download of content $2$, the request for content $1$ has to be rejected. However, if this ongoing download can be ``forwarded'' to the idle box $2$, the new request can be satisfied without breaking the old one. We call this type of forwarding ``repacking.''

In the the feasibility condition \eqref{eq: feasible_ori} and its equivalent form \eqref{eq: feasible_hall}, we actually allow perfect repacking to identify a feasible $\{n_c\}$. In a real system, perfect repacking needs to enumerate all the possible serving patterns and choose the best one based on some criterion, which is usually computationally infeasible. 
We then propose a heuristic repacking algorithm which is not so complex but can achieve similar functionality and improve performances, although imperfect. 

Several variables need to be defined before we describe the algorithm:

\begin{itemize}
\item $n_c$: the system-wide ongoing downloads of content $c$, which does not count the downloads from the server.
\item $\mathcal{B}_c^k$: The set of boxes which have content $c$ (``potential candidate boxes'') and $k$ free connections, for $0\le k \le U$.
\item $D_c$: number of boxes which has content $c$. $D_c = \sum_{k=0}^{U} |\mathcal{B}_c^k|$.
\item $\mathbf{u}_b$: a $U$-dimensional vector, of which the $i$-th component represents the content box $b$ is using its $i$-th connection to upload (a value $0$ represents a free connection).
\item $c_o$: the ``orphan content'' which is affiliated with a new request or an ongoing download but has not been assigned with any box.
\item $\mathcal{C}_o$: the set of contents which has once been chosen as orphan contents.
\item $t_R$: the number of repacking already done.
\end{itemize}

Note that when choosing a box to serve a request, load balancing is already considered, which to some extent reduces the chance of necessary repacking in later operations. However, repacking is still needed for an incoming request for content $c$ as soon as $\cup_{k>0}\mathcal{B}_c^k = \emptyset$. 

\noindent\hrulefill\\
\textbf{Repacking Algorithm}

\noindent\hrulefill\\
After getting a request for content $c$ while $\cup_{k>0}\mathcal{B}_c^k = \emptyset$, the server
\begin{enumerate}
\item Initialize $c_o := c$, $\mathcal{C}_o := \{c\}$, and $t_R := 0$.
\item Let $\bar{\mathcal{C}} = \{c': n_{c'} / D_{c'} > n_{c_o} / D_{c_o}~\mbox{and}~c'\not\in\mathcal{C}_o\}$, i.e., a set of contents which haven't become orphans during this repacking process and of which the utilization factor (may be larger than $1$) is larger than that of the current orphan content $c_o$. If $\bar{\mathcal{C}}_o = \emptyset$, regard $c_o$ as a loss and TERMINATE. \label{step: repack_compare}
\item Choose $c^* = \arg\max_{c'\in \bar{\mathcal{C}}}\{n_{c'} / D_{c'}\}$. Uniformly pick one (box, connection) pair from
$$\{(b,i):~b\in \mathcal{B}_c^0,~c^*\mbox{ is the $i$-th component of }\mathbf{u}_b\}.$$
\item Use the chosen box $b$ and its $i$-th connection to continue uploading the remaining part of content $c_o$. At the same time, $c^*$ which was served using that connection becomes a new orphan, i.e., $c_o := c^*$. Update $\mathbf{u}_b$ and $\{n_c\}$. Set $t_R:=t_R+1$.
\item If $\cup_{k>0}\mathcal{B}_{c_o}^k \not=\emptyset$, i.e., there exists a free connection to serve the new $c_o$, then use the load-balancing-based box selection rule to select a box to continue uploading the remaining part of $c_o$. The repacking process is perfect (no remaining orphan) and TERMINATE. Otherwise, \label{step: repack_rescue}
    \begin{itemize}
    \item If $t_R = t_R^{max}$, a customized algorithm parameter ($0 \le t_R^{max}\le C$), regard $c_o$ as a loss and TERMINATE.
    \item Otherwise, set $\mathcal{C}_o:=\mathcal{C}_o + \{c_o\}$, and go to Step~\ref{step: repack_compare}.
    \end{itemize}
\end{enumerate}
\noindent\hrulefill \vspace{0.1in}

\subsubsection{A Practical Issue in Cache Update}
When a box $b$ is chosen for cache update (and it does not hold the content $c$ corresponding to the request), it might still be uploading  content $c'$ which is to be replaced. This fact is not captured by the Markov chain model. In practice, those ongoing services must be terminated. Since we have introduced the repacking scheme, they become ``orphans'' ready for repacking. We implement the procedure as follows: 
\begin{enumerate}
    \item Rank these orphans by their remaining service time in the ascending order, i.e., the original download which is sooner to be completed is given higher priority.
    \item Do repacking one by one until one orphan fails to be repacked. Note that here the repacking algorithm starts from Step~\ref{step: repack_rescue}, since there may already be some boxes with both content $c$ and free connections.
\end{enumerate}


\subsection{Proof of Theorem~\ref{thm: wf_with_reservation}}
\label{sec: proof_wf_reserve}
The Lagrangian of OPT 2 is
\begin{eqnarray*}
&&L(\mathbf{\tilde{m}},\bm{\lambda}, \mathbf{x};\mathbf{u},\mathbf{v},\mathbf{y},\mathbf{z},\mathbf{w},\eta,\gamma)\\
&=& \sum_{c\in \mathcal{C}}\Big[\rho_c\tilde{m}_c+x_c - u_c(\tilde{m}_c-1)-v_c(\lambda_c-\tilde{m}_c)\\
&&- y_c(x_c-\lambda_c)-z_c(x_c-\rho_c+\rho_c\tilde{m}_c) + w_c\lambda_c\Big]\\
&&-\eta\left(\sum_{c\in \mathcal{C}}\lambda_c-1\right)-\gamma\left(\sum_{c\in \mathcal{C}}\tilde{m}_c-M\right).
\end{eqnarray*}
The KKT condition includes the feasible set defined in OPT 2 and the following:
\begin{eqnarray*}
\frac{\partial L}{\partial x_c} = 1- y_c - z_c &=& 0,~\forall c;\\
\frac{\partial L}{\partial \tilde{m}_c} = \rho_c- u_c + v_c - \rho_c z_c -\gamma &=& 0,~\forall c;\\
\frac{\partial L}{\partial \lambda_c} = - v_c + y_c - \eta + w_c &=& 0,~\forall c;\\
u_c(\tilde{m}_c-1)&=&0,~u_c\ge 0,~\forall c;\\
v_c(\lambda_c-\tilde{m}_c) &=&0,~v_c\ge 0,~\forall c;\\
y_c(x_c-\lambda_c) &=& 0,~y_c\ge 0,~\forall c;\\
z_c(x_c-\rho_c+\rho_c\tilde{m}_c) &=& 0,~z_c\ge 0,~\forall c;\\
w_c\lambda_c &=&0,~w_c \ge 0,~\forall c.
\end{eqnarray*}
We then put the solution stated in the theorem into KKT condition to check whether the condition is satisfied. The analysis is as follows:
\begin{itemize}
\item For $1\le c \le M-1$, since $\tilde{m}_c=1$ and $\lambda_c=x_c=0$, we obtain that $v_c = 0$, $y_c + z_c = 1$, $\rho_c(1-z_c) = u_c + \gamma$, and $y_c = \eta-w_c$. Letting $w_c = 0$, we further have:
    $u_c = \rho_c \eta - \gamma,~y_c = \eta,~z_c = 1-\eta.$ To keep $u_c,y_c,z_c\ge 0$, we must have $\eta \in [0,1]$ and $\gamma \le \rho_c\eta$, for $1\le c \le M-1.$ Thus, since $\{\rho_c\}$ are also ranked in the descending order, we have
    \begin{equation}
    \gamma \le \rho_{_{M-1}}\eta.\label{eq: opt2_sol_ineq1}
    \end{equation}
\item For $M \le c \le c^*$, since $\tilde{m}_c = \lambda_c = x_c = \rho_c/(1+\rho_c)$, we obtain that $u_c = w_c= 0,~y_c+z_c = 1,~\rho_c(1-z_c)=\gamma-v_c,~y_c = \eta+v_c$. We further have:
    $$v_c = \frac{\gamma-\rho_c\eta}{\rho_c+1},~y_c = \frac{\eta+\gamma}{\rho_c+1},~z_c=1-\frac{\eta+\gamma}{\rho_c+1}.$$
    To keep $v_c,y_c,z_c\ge 0$, we must have $\rho_c\eta \le \gamma \le \rho_c+1-\eta,~\mbox{for}~M\le c \le c^*.$ Thus,
    \begin{equation}
    \rho_{_{M}}\eta \le \gamma \le \rho_{c^*}+1-\eta. \label{eq: opt2_sol_ineq2}
    \end{equation}
\item For $c = c^*+1$, when $m_c =0$, it degenerates to the next case. When $\tilde{m}_c >0$, since $\tilde{m}_c=\lambda_c = x_c < \rho_c(1-\tilde{m}_c)$, we obtain that $u_c = w_c = z_c =0,~y_c=1,~\rho_c+v_c = \gamma,~\eta+v_c = 1.$ We further have
    \begin{equation}
    \gamma = \rho_{c^*+1}+1-\eta. \label{eq: opt2_sol_eq3}
    \end{equation}
\item For $c^*+2\le c \le C$, since $\tilde{m}_c =\lambda_c = x_c = 0$, we obtain that $u_c = z_c =0,~y_c = 1,~v_c = \gamma - \rho_c,~w_c = \eta+v_c-1 = \eta+\gamma-\rho_c-1$. To keep $v_c,w_c\ge 0,$ and due to the fact that $\eta\in [0,1]$, we must have $\gamma \ge \rho_c,~\mbox{for}~c^*+2 \le c \le C.$ Thus,
    \begin{equation}
    \gamma \ge \rho_{c^*+2}.\label{eq: opt2_sol_ineq4}
    \end{equation}
\end{itemize}
For inequalities~(\ref{eq: opt2_sol_ineq1}),~(\ref{eq: opt2_sol_ineq2}), (\ref{eq: opt2_sol_ineq4}) and equation~(\ref{eq: opt2_sol_eq3}) to hold simultaneously, we can choose a $\eta$ which satisfies
$$
\frac{\rho_{c^*+1}+1}{\rho_{_{M-1}}+1} \le \eta \le \frac{\rho_{c^*+1}+1}{\rho_{_{M}}+1},
$$
which also satisfies $\eta \in [0,1]$. Therefore, the theorem is proved.

It should be mentioned that when $\sum_{c=M}^{c^*}\tilde{m}_c = 1$, i.e., $\tilde{m}_{c^*+1} = 0$, the case ``$c = c^*+1$'' can be combined with the next case ``$c^*+2\le c \le C$'', hence equation~(\ref{eq: opt2_sol_eq3}) does not exist while inequality~(\ref{eq: opt2_sol_ineq4}) is changed to $\gamma \ge \rho_{c^*+1}.$ Then, we can just choose a $\eta$ which satisfies
$$
0\le \eta \le \frac{\rho_{c^*+1}+1}{\rho_{_{M}}+1}.
$$

\subsection{Storage of Segments and Parallel Substreaming}
\label{sec: parallel}
We have mentioned before that compared to the ``storage of complete contents and downloads by single streaming'' setting, a more widely used mechanism in practice is that each box stores one specific segment of a video content and a download (streaming) comprises parallel substreaming from different boxes. To model this mechanism, we have the following simplifying assumptions: Each content is divided into $K$ segments with equal length which are independently stored. Each box can store up to $M$ segments (actually it does not matter if we keep the original storage space of each box, i.e., $M$ complete contents, which now can hold $M\!K$ segments, since the storage space is a customized parameter) and these $M$ segments do not necessarily belong to $M$ distinct contents. The bandwidth of each box is kept as $U$, so now each box can accommodate $U\!K$ parallel substreaming, each with download rate $1/K$ (the average service duration is still kept as $1$ because each segment is $1/K$ of the original content length). The definition of ``traffic load'' $\rho$ is then the same as in equation~(\ref{eq: rho}). A request for a content will be divided into sub-requests submitted to the boxes holding those corresponding segments of this content, generating $K$ parallel substreaming flows in total (one box can serve more than one substreaming service for this request if it caches more than one distinct segments of this content).

Let $\theta$ represent a segment and $\theta\in c$ indicate that $\theta$ is a segment of content $c$. Recall that we use $n_c$ to denote the number of concurrent downloads (now called ``streams'') of content $c$ in the network. We further use $n_{\theta}$ to denote the number of substreams corresponding to segment $\theta$.

Now the original feasibility constraint~(\ref{eq: feasible_ori}) becomes
\begin{eqnarray}
\sum_{b:~\theta \in \mathcal{J}_b} z_{\theta b} & = & n_{\theta},~\forall~\theta\in \Theta; \nonumber\\
\sum_{\theta:~\theta\in \mathcal{J}_b} z_{\theta b} & \le & U\!K,~\forall~b \in \mathcal{B}, \label{eq: feasible_ori_parallel}
\end{eqnarray}
where $\Theta$ represents the whole set of segments and $z_{\theta b}$ denotes the the number of concurrent substreams downloading segment $\theta$ from box $b$. It is easy to see that the equivalent version which can be proved by Hall's theorem becomes:
\begin{equation}
\forall~\mathcal{S} \subseteq \Theta,~\sum_{\theta \in \mathcal{S}} n_{\theta}
\le KU \left|\{b\in{\mathcal B}:\;  \mathcal{S}\cap  \mathcal{J}_b\ne \emptyset\}\right|, \label{eq: feasible_hall_parallel}
\end{equation}
where with a little abuse of notation, $\mathcal{S}$ is used to denote a subset of $\Theta$, instead of $\mathcal{C}$ as before.

Since we have assumed that video duration and video streaming rate are all the same, one naturally has $n_{\theta} = n_c$ for all $\theta\in c$. If we let randomness exist in the service duration, then within one stream, some substreams may complete earlier than the others. Therefore, the above equality needs to be added as a constraint (and used to come up with the following result), i.e., the bandwidth for the $K$ substreams should be reserved until the whole streaming is completed.

Then, in the proof of the optimality of ``proportional-to-product'' placement for DSN, every expression keeps the same, except that the feasibility constraint~(\ref{eq: primal_constraint}) is changed to
\begin{equation}
\forall~\mathcal{S} \subseteq \Theta,~\sum_{\theta \in \Theta} \sum_{c:\theta\in c} x^{_{(B)}}_{c} \le \sum_{j: j \cap \mathcal{S} \not= \emptyset} m_j BU\!K, \label{eq: primal_constraint_parallel}
\end{equation}
and the ``proportional-to-product'' placement $\{m_j\}$ is now with respect to each segment, i.e., $m_j = \prod_{\theta\in j}\hat{\nu}_\theta /Z$ for all $j\subseteq \Theta$ s.t. $|j|=M$, where $Z$ is the normalizing constant and $\hat{\nu}_{\theta} = \hat{\nu}_{c}$ if $\theta \in c$. With an observation that $\sum_{\theta\in\Theta} \hat{\nu}_{\theta} = K \sum_{c\in\mathcal{C}} \hat{\nu}_{c} = K$, we can still come to an inequality same with inequality~(\ref{eq: feasible_prod}), except that $c$ and $\mathcal{C}$ are replaced by $\theta$ and $\Theta$ respectively. All the succeeding steps are exactly the same in the proof of optimality.

\subsection{Another Approach to Bound the Chance of ``Good Contents'' in Proving Theorem \ref{thm: suff_large_catalog}}
\label{sec: ber_chernoff}
At the first stage of proving Theorem \ref{thm: suff_large_catalog}, we mentioned that we can also directly derive the Chernoff bound on the RHS of inequality \eqref{eq: chernoff1} to get the result. The derivation is given below: 

Recall that $I(x) = \sup_{\theta} \{x \theta - \ln (\EE[e^{\theta Z_i}])\}$ is the Cram\'er transform of the Bernoulli random variable $Z_i$. It is easy to check that
$$
I(x) = \left\{
\begin{array}{cc}
a \ln \left(\frac{x}{p}\right) + (1-x) \ln\left(\frac{1-x}{1-p}\right) & \mbox{if}~x\in [0,1]\\
+\infty & \mbox{else}
\end{array}
\right.
$$
Also recall that $a\triangleq \left(M^{2/3} + \frac{\nu_i M}{\rho U}\right) / M\!B = M^{-1/3}/B + p$, where $p\triangleq \frac{\nu_i}{\rho BU}$. Since we are considering a large $B$, $a\in [0,1]$ holds. Thus, denoting $\bar{p}=1-p$ for brevity, the exponent of RHS of inequality~\eqref{eq: chernoff1} reads
\begin{eqnarray}
&&-M\!B \cdot I(a) \nonumber\\
&=&-(pM\!B+M^{2/3})\cdot
\ln\left(1+\frac{1}{p M^{1/3}B}\right) \nonumber\\
&&-(\bar{p}M\!B-M^{2/3})\cdot
\ln\left(1-\frac{1}{\bar{p} M^{1/3}B}\right) \nonumber\\
&=& -\frac{pM\!B + M^{2/3}}{p M^{1/3}B}+\frac{pM\!B}{2(pM^{1/3}B)^2} \nonumber\\
&&+\frac{\bar{p}M\!B-M^{2/3}}{\bar{p} M^{1/3}B}+ \frac{\bar{p}M\!B}{2(\bar{p}M^{1/3}B)^2}
+o(M^{1/3}) \nonumber\\
&=&-\frac{M^{1/3}}{2B}\left(\frac{1}{p}+\frac{1}{\bar{p}}\right)+o(M^{1/3}) \nonumber\\
&=&-\frac{M^{1/3}}{2}\left(\frac{\rho U}{\nu_i}+\frac{1}{B(1-\frac{\nu_i}{\rho U})}\right)+o(M^{1/3}) \nonumber\\
&=& -\Theta\left(M^{1/3}\right). \label{eq: decay_rate_bound1}
\end{eqnarray}
With similar steps as above, we can show the exponent exponent of the RHS of inequality \eqref{eq: chernoff2} is also $-\Theta\left(M^{1/3}\right)$. Therefore, inequality \eqref{eq: good_bound} is proved.
\end{document}